\let\csname equation*\endcsname\relax
\let\csname endequation*\endcsname\relax
\newcommand{\Pset}[0]{\Delta_d^\downarrow}
\newcommand{\K}[0]{\mathcal{K}}
\newcommand{\B}[0]{\mathcal{B}}
\newcommand{\C}[0]{\mathcal{C}}
\renewcommand{\P}[0]{\mathcal{P}}
\renewcommand{\S}[0]{\mathcal{S}}
\newcommand{\E}[0]{\mathcal{E}}
\newcommand{\pset}[0]{\Delta_d^{\downarrow}}
\newcommand{\xmax}[0]{\overline{x}^{(\epsilon)}}
\newcommand{\xmin}[0]{{\underline{x}}^{(\epsilon)}}
\newcommand{\nueps}[0]{\nu^{(\epsilon)}}
\newcommand{\postmaj}[1]{\succeq_{#1,\epsilon}}
\newcommand{\premaj}[1]{\,{}_{#1,\epsilon}\!\!\succeq}
\newcommand{\npostmaj}[1]{\nsucceq_{#1,\epsilon}}
\newcommand{\npremaj}[1]{\,{}_{#1,\epsilon}\!\!\nsucceq}
\newcommand{\maj}[0]{\succeq}
\newcommand{\nmaj}[0]{\not\succeq}
\newcommand{\eg}[0]{see\textit{, e.g.}, }
\renewcommand{\tr}[0]{\mathrm{Tr}}
\newcommand{\postapp}[1]{\overset{#1,\epsilon}{\mapsto}}
\newcommand{\preapp}[1]{\underset{#1,\epsilon}{\mapsto}}
\theoremstyle{plain}
  \newtheorem{theorem}{Theorem}[section]
  \newtheorem{lemma}[theorem]{Lemma}
  \newtheorem{corollary}[theorem]{Corollary}
\theoremstyle{definition}
  \newtheorem{definition}[theorem]{Definition}
  \newtheorem{example}[theorem]{Example}
\begin{document}

\title[Extremal elements]{Extremal elements of a sublattice of the majorization lattice and approximate majorization}

\author{C Massri$^{1,2}$, G Bellomo$^3$, F Holik$^{2,4}$, G M Bosyk$^{2,4,5}$}

\address{$^1$ Instituto de Investigaciones Matemáticas ``Luis A. Santaló'', UBA, CONICET, CABA, Argentina}
\address{$^2$ Grupo de Matem\'atica Aplicada, Departameto de Matem\'atica, Universidad CAECE, CABA, Argentina}
\address{$^3$ CONICET-Universidad de Buenos Aires, Instituto de Investigación en Ciencias de la Computación (ICC), Buenos
Aires, Argentina}
\address{$^4$ Instituto de F\'{\i}sica La Plata, UNLP, CONICET, Facultad de Ciencias Exactas, La Plata, Argentina}
\address{$^5$ Universit\`{a} degli Studi di Cagliari, Via Is Mirrionis 1, I-09123, Cagliari, Italy}

\ead{gbosyk@fisica.unlp.edu.ar}

%%% Abstract
\begin{abstract}
Given a probability vector $x$ with its components sorted in non-increasing order, we consider the closed ball
$\B^p_\epsilon(x)$ with $p \geq 1$ formed by the probability vectors whose $\ell^p$-norm distance to the center $x$ is less than or equal to a radius $\epsilon$. Here, we provide an order-theoretic characterization of these balls by using the majorization partial order.
Unlike the case $p=1$ previously discussed in the literature, we find that the extremal probability vectors, in general, do not exist for the closed balls $\B^p_\epsilon(x)$ with $1<p<\infty$.
On the other hand, we show that $\B^\infty_\epsilon(x)$ is a complete sublattice of the majorization lattice. 
As a consequence, this ball has also extremal elements. 
In addition, we give an explicit characterization of those extremal elements in terms of the radius and the center of the ball. 
This allows us to introduce some notions of approximate majorization and discuss its relation with previous results of approximate majorization given in terms of the $\ell^1$-norm. 
Finally, we apply our results to the problem of approximate conversion of resources within the framework of quantum resource theory of nonuniformity.
\smallskip

\end{abstract}

\noindent{\it Keywords}: majorization lattice, sublattice, approximate majorization

%\submitto{\JPA}
\maketitle

%%% Introduction %%%
\section{Introduction}

Majorization has become a powerful mathematical tool with applications in different disciplines from economy to physics (\eg Ref.~\cite{MarshallBook} for an introduction to majorization and some of its applications). This is due to the fact that majorization provides an intuitive way of comparing probability vectors. 
In particular, majorization has also been widespread in the quantum realm (\eg~Refs.~\cite{Nielsen2001,Bellomo2019} and references therein for applications of majorization in quantum information). 
Indeed, in many resource theories, the deterministic and exact transformations between resources by means of free operations are governed by a majorization arrow between  probability vectors associated to the resources (\eg Ref.~\cite{Chitambar2019} for an  introduction to quantum resource theories and Refs.~\cite{Coecke2019,Fritz2019} for an abstract formulation of this formalism in terms of a symmetric monoidal category). 
For instance, this is the case of transformations of bipartite entangled pure states (resources) by means of local operations and classical communication (the free operations), as it is stated by the celebrated Nielsen's theorem~\cite{Nielsen1999}, where the associated probability vectors are formed using the Schmidt coefficients of the corresponding states. 
Other quantum resource theories where resource transformations are also given by a majorization arrow, are the ones of quantum coherence~\cite{Du2015,Chitambar2016,Zhu2017,Du2017} and nonuniformity (or purity)~\cite{Gour2015,Streltsov2018}.

As expected, it is not always possible to transform resources in an exact and deterministic way. 
For that reason, some other alternatives are studied. 
In particular, if one allows some degree of error, approximate transformations arise as the natural ones. 
For instance, approximate transformations within entanglement~\cite{Vidal2000,Owari2008}, coherence~\cite{Du2015} and thermodynamics~\cite{Chitambar2016,Chubb2018} quantum resource theories have already been studied.
In this context, the notion of approximate majorization in terms of the $\ell^1$-norm~\cite{Torgersen1991,Chubb2018,Horodecki2018} is one of the most considered in the literature. 
However, in most of these works the fully order-theoretic properties of majorization had not been considered.
Here, we aim to exploit these properties and subsequently analyze other possible definitions of approximate majorization.

Precisely, from the order-theoretic viewpoint, majorization is a binary relation that partially orders the set of probability vectors whose entries are sorted in a non-increasing manner (\eg \cite[Sec. I.B]{MarshallBook}). Moreover, it has been shown that there is a proper unique greatest lower bound (infimum) and a unique least upper bound (supremum) for any pair of probability vectors. This leads to the definition of the majorization lattice~\cite{Bapat1991,Cicalese2002}. This lattice is also complete, that is, the supremum and infimum exist for arbitrary subsets of probability vectors~\cite{Bapat1991,Bondar1994,Bosyk2019}. 
We remark that the lattice structure of majorization, beyond its partial order, has recently been found useful for different quantum information problems, namely the study of majorization uncertainty relations~\cite{Partovi2011,Wang2019,Li2019,Yuan2019}, entanglement transformations~\cite{Bosyk2017,Bosyk2018,Guo2018,Sauerwein2018}, and optimal common resource in majorization-based quantum resources theories~\cite{Bosyk2019}, among others~\cite{Hanson2018,Partovi2009,Korzekwa2017,Yu2019,Xiao2019,Hanson2019}.   

Here, we investigate the order-theoretic properties of a particular subset of probability vectors. 
More precisely, we consider the set $\B^p_\epsilon(x)$ (with $p\geq 1$) formed by probability vectors (with entries sorted in a non-increasing order) that are at $\ell^p$-norm distance $\epsilon$ from a given probability vector $x$. 
The case $p=1$ (i.e., considering the $\ell^1$-norm) is particularly interesting, since it can be related to notions of approximate majorization previously introduced in the literature~\cite{Torgersen1991,Chubb2018}. 
Indeed, this case has been recently considered in Refs.~\cite{Horodecki2018,Hanson2018}, where it is shown that the set $\B^1_\epsilon(x)$ admits extremal probability vectors, in the sense of majorization. 
We show that a similar result arises for the case $p=\infty$, that is, the set $\B^\infty_\epsilon(x)$ has also extremal probability vectors, leading to other notions of approximate majorization in terms of the $\ell^\infty$-norm. 
Moreover, we show that this is no longer true in general for other $\ell^p$-norms, turning the cases $p=1$ and $p=\infty$ particularly interesting. In order to prove our results, we first show that $\B^\infty_\epsilon(x)$ is a complete sublattice of the majorization lattice. This guarantees the existence of minimum and maximum elements. Next, we provide a characterization of them in terms of the radius and the center, and we relate them to two different notions of approximate majorization in terms of the $\ell^\infty$-norm distance.
Finally, we show that our results can be applied to the problem of approximate state conversion within the resource theory of nonuniformity \cite{Gour2015,Streltsov2018}.

The rest of this work is organized as follows. 
In Section~\ref{sec:majorization}, we recall the notions of majorization and majorization lattice, and its main properties. 
In Section~\ref{sec:bola}, we provide one of our main results, namely that the ball $\B^\infty_\epsilon(x)$ is a complete sublattice of the majorization lattice, and we give the explicit forms of the maximium an minimum elements.
In Section~\ref{sec:approx}, we introduce the notion of approximate majorization in terms of the $\ell^\infty$-norm distance and present its properties and discuss its relation with previous notions introduced in the literature. 
In Section~\ref{sec:nonuniform}, we apply our results to the framework of the quantum resource theory of nonuniformity.
Finally, some final remarks are drawn in Section~\ref{sec:conclusions}. 
For the sake of readability and completeness, all order-theoretic notions used in the main text are introduced in \ref{app:lattice}, whereas all technical details and proofs are delegated to Appendices B, C and D.

%%% Preliminaries %%%
\section{Preliminaries: majorization lattice}
\label{sec:majorization}

Here, we recall some basics of majorization theory and the majorization lattice that will be useful later. 
For a more complete introduction to the subject \eg  Ref.~\cite{MarshallBook}.

We are going to work with probability vectors, which lie on a $d$-dimensional space: the $(d-1)$-probability simplex
\begin{equation} 
\label{eq:simplex}
  \Delta_d :=\left\{x \in \mathbb{R}^d \,\colon\, x_i\geq 0 \; \text{and} \; \sum_{i=1}^d x_i=1 \right\}.
\end{equation}
%
%Notice that this set can be geometrically visualized as a convex polytope embedded in the $(d-1)$-probability simplex.
%
In this space, we can talk about majorization between two probability vectors (see, e.g. \cite{MarshallBook}) in the following sense.

\begin{definition}
\label{def:majorization}
For given $x,y \in \Delta_d$, it is said that $x$ majorizes $y$ (denoted by $x \succeq y$) if
\begin{equation} \label{eq:partialsums}
  s_k(x^\downarrow) \geq s_k(y^\downarrow), \quad \forall k \in \{1, \ldots, d-1\},
\end{equation}
where $x^\downarrow$ and $y^\downarrow$ denotes	 that the components of $x$ and $y$ are sorted in non-increasing order and $s_k(x):= \sum^k_{i=1} x_i$.
\end{definition}

Since we are working with probability vectors, we trivially have $s_d(x) = s_d(y) =1$ and, for that reason, we exclude this condition from the definition of majorization.
Definition~\ref{def:majorization} provides a natural way to see if one probability vector is more concentrated than another one.
Indeed, any probability vector $x \in \Delta_d$ trivially satisfies the relations $e_1 \succeq x  \succeq e_d$, with $e_1 := (1, 0, \dots, 0)$ and $e_d := \left(\frac{1}{d}, \ldots, \frac{1}{d} \right)$.

Remarkably, the majorization relation can be posed in several alternative ways.
A particular one, that relates majorization with doubly stochastic matrices, was originally discussed in the seminal work \cite{Hardy1929}. 
Precisely, a $d\times d$ matrix $B$ is doubly stochastic matrix if $B_{ij} \geq 0$ and $\sum_{i}B_{ij}=\sum_{j}B_{ij}= 1$ for all $i,j$. Therefore,
\begin{equation}
\label{eq:majdouble}
x \maj y \iff \text{there exists a doubly stochastic matrix such that } y = Bx.
\end{equation}

Moreover, by using Birkhoff’s theorem~\cite{Birkhoff1946}, which states that the set of  $d\times d$ doubly stochastic matrices coincides with the convex hull of the set of $d \times d$ permutation matrices, the r.h.s of~\eqref{eq:majdouble} turns out to be equivalent to 
$y= (\sum_k p_k \Pi_k)x$ for some $p \in \Delta_{d'}$ with $d'\leq d^2- 2d+2$ (see Ref.~\cite{Marcus1959}) and some set of permutation matrices $\{\Pi_k\}$.
In addition, notice that $e_d=Be_d$ for any doubly stochastic matrix $B$.
In this sense, majorization can be interpreted as a quantification of the notion of nonuniformity.

Among several equivalent definitions of majorization, a particularly useful one, for our purposes, appeals to the notion of Lorenz curve \cite{Lorenz1905}. More precisely, for a given probability vector $x$, one introduces the set of points $\left\{\left(k,s_k(x^\downarrow)\right)\right\}_{k=0}^d$ (with the convention $(0,0)$ for $k=0$). Let $L_x(\omega)$, with $\omega \in [0,d]$, be the polygonal curve obtained by the linear interpolation of these points. 
For $x \in \Delta_d$ this corresponds to the Lorenz curve of $x$, which is a non-decreasing and concave polygonal curve from $(0,0)$ to $(d,1)$ (\eg Fig. \ref{fig:figure1}.a and \ref{fig:figure1}.b). 
In this way, given two Lorenz curves $L_x(\omega)$ and $L_y(\omega)$, $L_x(\omega) \geq L_y(\omega)$ for all $\omega \in [0,d]$ implies that $x  \succeq y$, and vice versa.

Now, we recall some interesting order-theoretic properties of majorization. For the sake of completeness, all the order-theoretic notions that will be used here are defined in~\ref{app:lattice}. 
First, let us introduce the set of $d$-dimensional probability vectors whose components are sorted in non-increasing order,
\begin{equation} 
\label{eq:pset}
\pset :=\left\{x \in \Delta_d \,\colon\, x_i\ge x_{i+1}  \; \text{with} \; i \in \{1, \ldots, d-1\} \right\}.
\end{equation}
Notice that this set can be geometrically visualized as a convex polytope embedded in the $(d-1)$-probability simplex $\Delta_d$ (see, for instance, Figure~\ref{fig:figure1}.c, where the $2$-simplex $\Delta_3$ and $\Delta^\downarrow_3$ are depicted).

It can be shown that the set $\Pset$ equipped with the majorization relation $\maj$ given in Definition~\ref{def:majorization} is a \textit{partially ordered set} (POSET). 
Notice that if we relax the constraint that the components are sorted in a non-increasing manner, then antisymmetry condition is no longer valid in general (see \ref{app:lattice}). 
Instead, a weaker version holds, where $x$ and $y$ only differ by a permutation of its entries. 
In such case, the set $\Delta_d$ equipped with majorization relation gives a pre-ordered set, since the conditions of reflexivity and transitivity remain valid. 
On the other hand, $\langle \pset, \maj \rangle$ is not a totally ordered set, in general. 
This is due to the fact that there always exist $x,y \in \Pset$ such that $x \nmaj y$ and $y \nmaj x$ for any $d >2$. 
In this situation, we say that the probability vectors are incomparable. 
In terms of Lorenz curves, this means that they are different but intersect at least at one point in the interval $(1,d)$.
However, in such case, one can easily realize that there are infinite Lorenz curves below the ones of $x$ and $y$, and among of all them, there is one which is the greatest one. In the same vein, there are infinitely many Lorenz curves above those of $x$ and $y$, and there is one which is the lowest one (\eg Fig. \ref{fig:figure1}.a and \ref{fig:figure1}.b, where Lorenz curves of $x=(0.7, 0.2, 0.1)$, $y=(0.6, 0.35, 0.05)$, as well as, the corresponding ones of $x \vee y$ and $x \wedge y$ are depicted).

\begin{figure}
	\label{fig:figure1}
	\includegraphics[width=\textwidth]{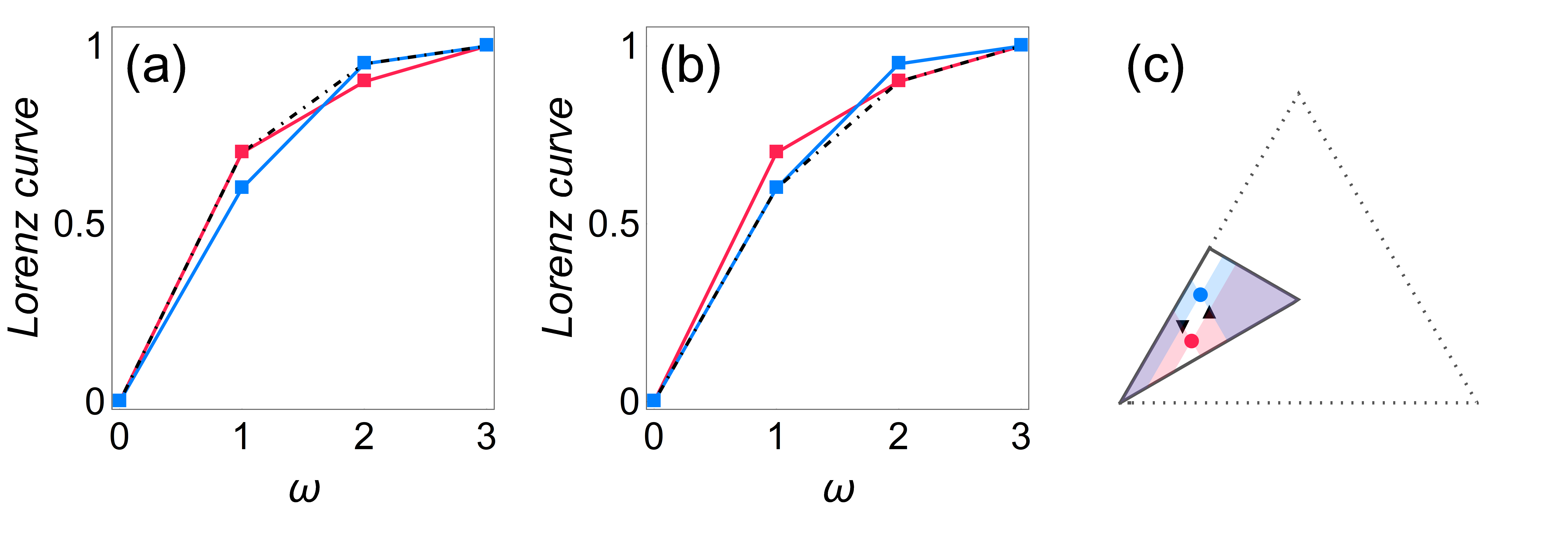}
	\caption{Let $x=(0.7, 0.2, 0.1)$ and $y = (0.6, 0.35, 0.05)$, then $x \vee y = (0.7, 0.25, 0.05)$ and $x \wedge y = (0.6, 0.3, 0.1)$ (a) Lorenz curves of $x$ (red), $y$ (blue) and $x \vee y$ (black dot-dashed) (b)  Lorenz curves of $x$ (red), $y$ (blue) and $x \wedge y$ (black dot-dashed) (c) 2-Simplex $\Delta_3$ (region inside the dotted triangle), $\Delta^\downarrow_3$ (region inside the black triangle), $x$ (red point), $y$ (blue point), $x \vee y$ ($\blacktriangledown$) and $x \wedge y$ ($\blacktriangle$). Blue region indicates the set $\left\{x' \in \Delta^\downarrow_3: x \maj x' \ \text{or} \ x' \maj x\right\}$, the red region indicates the set $\left\{x' \in \Delta^\downarrow_3: y \maj x' \ \text{or} \ x' \maj y\right\}$ and the violet region indicates the intersection of both sets.}
\end{figure}

These intuitions can be formalized and allow to formulate the notion of infimum, $x \wedge y$, and supremum, $x \vee y$, between two probability vectors $x,y \in \Pset$, that lead to the definition of the \emph{majorization lattice}~\cite{Bapat1991,Cicalese2002}. 
In particular, the algorithms to obtain $x \vee y$ and $x \wedge y$ were first introduced in \cite{Cicalese2002}.
Clearly, the majorization lattice is a bounded lattice, with top and bottom elements $e_1$ and $e_d$, respectively.
Moreover, it turns out that the majorization lattice is indeed \emph{complete}~\cite{Bapat1991,Bondar1994,Bosyk2019}.
In other words, the infimum and supremum exist for every family of probability vectors in $\Pset$. 
We reproduce this result below and the algorithms to obtain the corresponding infimum and supremum, as it will be useful for the rest of the work.
\begin{lemma}[see, for example, Prop. 1 of \cite{Bosyk2019}] \label{lemma:completnessmaj}
The POSET $\langle \Delta_d, \maj \rangle$ is a complete lattice, that is, for arbitrary $\mathcal{P}\subseteq \Pset$ there exist the infimum $x^{\inf} := \bigwedge \mathcal{P}$ and the supremum  $x^{\sup} := \bigvee \mathcal{P}$ of $\mathcal{P}$. 
The components of $x^{\inf}$ are given by
\begin{equation} \label{eq:infuncountableset}
  x^{\inf}_k = \inf  \mathcal{S}_k  - \inf \mathcal{S}_{k-1} ,
\end{equation}
where $\mathcal{S}_k = \{s_k(x): x \in \mathcal{P} \}$ with $s_0(x) := 0$. To obtain the components of $x^{\sup}$, we must first define the probability vector with components
\begin{equation} \label{eq:supuncountableset1}
  \bar{x}_k = \sup  \mathcal{S}_{k} - \sup \mathcal{S}_{k-1}.
\end{equation}
Then, we compute the upper envelope of $\bar{L}(\omega)$, the
polygonal curve defined by the linear interpolation of the points
$\{(k,s_k(\bar{x})) \}_{k=0}^d$.
Finally, the components of the supremum are given by
\begin{equation} \label{eq:supuncountableset}
  x^{\sup}_k =  \bar{L}(k) - \bar{L}(k-1).
\end{equation}
\end{lemma}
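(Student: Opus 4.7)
The plan is to exploit the fact that every $x\in\pset$ has a concave Lorenz curve, i.e.\ the piecewise-linear interpolation of $k\mapsto s_k(x)$ is concave on $[0,d]$, together with the standard fact that the pointwise infimum of an arbitrary family of concave functions is again concave. I would first treat the infimum, then handle the supremum separately, because in the first case the infimum of the partial sums is already concave, whereas in the second case the supremum of the partial sums is in general not concave and must be corrected by taking an upper concave envelope.

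For the infimum I would set $f_k:=\inf\mathcal{S}_k$ and define $x^{\inf}_k=f_k-f_{k-1}$, and check that $x^{\inf}\in\pset$: the sum telescopes to $f_d-f_0=1-0=1$ since every $x\in\mathcal{P}$ satisfies $s_d(x)=1$, non-negativity follows because each $k\mapsto s_k(x)$ is non-decreasing so $f$ is non-decreasing, and the non-increasing ordering of the entries of $x^{\inf}$ is precisely the statement that $f$ is (discretely) concave, which follows from the observation above applied to the family of concave interpolations of $\{s_k(x)\}$. To see $x^{\inf}$ is a lower bound one notes $s_k(x^{\inf})=f_k\le s_k(x)$ for every $x\in\mathcal{P}$; to see it is the greatest lower bound, any lower bound $y\in\pset$ satisfies $s_k(y)\le s_k(x)$ for all $x\in\mathcal{P}$, hence $s_k(y)\le f_k=s_k(x^{\inf})$, so $x^{\inf}\maj y$.

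For the supremum the obstruction is that $g_k:=\sup\mathcal{S}_k$ need not be concave, so the naive candidate $\bar x_k=g_k-g_{k-1}$ need not lie in $\pset$. The remedy, already built into the statement, is to interpolate $\{(k,g_k)\}_{k=0}^d$ to get $\bar L$ and then take its upper concave envelope $L^{*}$ on $[0,d]$, defining $x^{\sup}_k:=L^{*}(k)-L^{*}(k-1)$. I would verify that $L^{*}$ is well defined (the constant function $1$ is a concave majorant, so the pointwise infimum over all concave majorants is itself concave and bounded), that it passes through $(0,0)$ and $(d,1)$ because those endpoint values are already attained by $\bar L$ and by any concave majorant between $0$ and $1$, and that $L^{*}$ is non-decreasing (the pointwise $\min$ of $L^{*}(\omega)$ with the non-decreasing rearrangement would otherwise give a strictly smaller concave majorant, contradicting minimality). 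These facts place $x^{\sup}$ in $\pset$.

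It remains to identify $x^{\sup}$ as the least upper bound. For the upper-bound property, $s_k(x^{\sup})=L^{*}(k)\ge\bar L(k)=g_k\ge s_k(x)$ for every $x\in\mathcal{P}$, so $x^{\sup}\maj x$. For minimality, fix any upper bound $z\in\pset$; then $s_k(z)\ge s_k(x)$ for every $x\in\mathcal{P}$ yields $s_k(z)\ge g_k=\bar L(k)$ at the integers, hence the concave interpolation of $k\mapsto s_k(z)$ is a concave function on $[0,d]$ majorizing $\bar L$, so by definition of the upper concave envelope it dominates $L^{*}$, giving $s_k(z)\ge L^{*}(k)=s_k(x^{\sup})$, i.e.\ $z\maj x^{\sup}$. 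The main technical nuisance I anticipate is the bookkeeping around the upper concave envelope, in particular confirming that it coincides with $\bar L$ at $0$ and $d$ and is monotone, since those two properties are what make $x^{\sup}$ a bona fide element of $\pset$; the rest of the argument reduces to routine manipulations with partial sums and the equivalence between $\maj$ and pointwise comparison of Lorenz curves.
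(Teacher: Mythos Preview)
The paper does not actually prove this lemma: it is stated as a known result, attributed to \cite{Bapat1991,Bondar1994,Bosyk2019}, and reproduced without proof for later reference. So there is no ``paper's own proof'' to compare against.

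That said, your sketch is a correct and standard route to the result. The infimum part is clean; your key observation that the pointwise infimum of the (concave) Lorenz curves is again concave is exactly what yields $x^{\inf}\in\pset$. For the supremum, your identification of the obstruction (the pointwise supremum of concave functions need not be concave) and its remedy via the upper concave envelope is the right idea, and your minimality argument---that any upper bound $z$ has a concave Lorenz curve dominating $\bar L$ at the integers, hence dominating $\bar L$ everywhere by concavity, hence dominating its concave envelope $L^*$---is the heart of the matter. The endpoint and monotonicity checks you flag as ``technical nuisances'' go through: $L^*$ agrees with $\bar L$ at $0$ and $d$ because affine majorants through those points exist, and $L^*$ is non-decreasing because it is concave, bounded above by the constant $1$, and attains that bound at the right endpoint $d$ (a concave function whose maximum occurs at the right endpoint is non-decreasing). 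One small wording issue: your justification that $L^*(0)=0$ and $L^*(d)=1$ (``attained by any concave majorant between $0$ and $1$'') is not quite the reason; rather, affine functions through $(0,0)$ and $(d,1)$ with suitable slopes already majorize $\bar L$, forcing the infimum over concave majorants to hit those values.
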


\noindent%
For completeness, in~\ref{app:upperenv} we recall the explicit algorithm to compute the upper envelope given in Ref.~\cite{Bosyk2019}.

When the set $\mathcal{P} \subseteq \Pset$ is a convex polytope, the corresponding infimum and supremum can be computed as the infimum and supremum of the set of vertices, $\mathrm{vert}(\mathcal{P})$, as explained in the following Lemma.
\begin{lemma}[see Lemma 1 of~\cite{Bosyk2019}] \label{lemma:convexplytope}
Let $\mathcal{P}$ be a convex polytope contained in $\Pset$, and $\mathrm{vert}(\mathcal{P})$ the set of vertices, $\mathrm{vert}(\mathcal{P}) = \{v^n\}^N_{n=1}$. Then, the infimum $x^{\inf} := \bigwedge \mathcal{P}$ and the supremum $x^{\sup} := \bigvee \mathcal{P}$ of $\mathcal{P}$ are given by the infimum and supremum elements of $\mathrm{vert}(\mathcal{P})$, namely
\begin{equation}
  x^{\inf} = \bigwedge \{ v^n \}^N_{n=1} \quad \text{and} \quad
  x^{\sup} = \bigvee \{ v^n\}^N_{n=1}.
\end{equation}
\end{lemma}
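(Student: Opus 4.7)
The plan is to reduce everything to the fact that a linear functional on a compact convex polytope attains its extrema at the vertices, and then to invoke Lemma~\ref{lemma:completnessmaj}.

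First, I would observe the key simplification: because $\mathcal{P} \subseteq \Pset$, every $x \in \mathcal{P}$ satisfies $x = x^\downarrow$, so for each fixed $k \in \{1,\dots,d-1\}$ the map $x \mapsto s_k(x^\downarrow) = \sum_{i=1}^{k} x_i$ is just a linear functional on $\mathcal{P}$. Since $\mathcal{P}$ is a compact convex polytope with vertex set $\{v^n\}_{n=1}^N$, standard linear programming (or equivalently, writing any $x = \sum_n \lambda_n v^n$ with $\lambda_n \geq 0$, $\sum_n \lambda_n = 1$, and using linearity) yields
\begin{equation}
    \inf_{x \in \mathcal{P}} s_k(x) = \min_{1 \leq n \leq N} s_k(v^n), \qquad \sup_{x \in \mathcal{P}} s_k(x) = \max_{1 \leq n \leq N} s_k(v^n).
\end{equation}
Let $\mathcal{S}_k = \{s_k(x) : x \in \mathcal{P}\}$ and $\mathcal{S}_k' = \{s_k(v^n) : 1 \leq n \leq N\}$. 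The equation above says $\inf \mathcal{S}_k = \inf \mathcal{S}_k'$ and $\sup \mathcal{S}_k = \sup \mathcal{S}_k'$ for every $k$.

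Next, I would simply plug these equalities into the explicit formulas of Lemma~\ref{lemma:completnessmaj}. For the infimum, formula~\eqref{eq:infuncountableset} depends on $\mathcal{P}$ only through the quantities $\inf \mathcal{S}_k$, so it produces the same vector $x^{\inf}$ whether applied to $\mathcal{P}$ or to $\mathrm{vert}(\mathcal{P})$; hence $\bigwedge \mathcal{P} = \bigwedge \{v^n\}_{n=1}^N$. For the supremum, the auxiliary vector $\bar{x}$ in~\eqref{eq:supuncountableset1} is built entirely from the $\sup \mathcal{S}_k$, so it is the same for $\mathcal{P}$ and for $\mathrm{vert}(\mathcal{P})$; the upper-envelope construction depends only on $\bar{x}$, so~\eqref{eq:supuncountableset} yields the same $x^{\sup}$ in both cases, giving $\bigvee \mathcal{P} = \bigvee \{v^n\}_{n=1}^N$.

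I do not anticipate a serious obstacle: the only point that needs to be spelled out carefully is that the sorting operation is trivial on $\mathcal{P}$ (because $\mathcal{P} \subseteq \Pset$), since otherwise $s_k(\cdot^\downarrow)$ would only be convex rather than linear and the vertex characterization would not apply to the infimum. Once this is noted, the rest is a direct substitution into Lemma~\ref{lemma:completnessmaj}.
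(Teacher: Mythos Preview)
Your proof is correct. The paper does not give its own proof of this lemma but merely cites it from \cite{Bosyk2019}; your argument---observing that $s_k$ is linear on $\Pset$, so its extrema over a convex polytope are attained on vertices, and then feeding the resulting equalities $\inf\mathcal{S}_k=\inf\mathcal{S}_k'$ and $\sup\mathcal{S}_k=\sup\mathcal{S}_k'$ into the explicit formulas of Lemma~\ref{lemma:completnessmaj}---is precisely the natural route and is fully rigorous as written.
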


\noindent%
Clearly, the infimum and the supremum do not necessarily belong to the convex polytope. In the sequel, we restrict our attention to a
particular class of convex polytopes that admit extremal probability vectors and inherits the lattice structure of majorization.

%%%%%%%%%%%%%%%%% Main Results
\section{Order-theoretic properties of \texorpdfstring{$\B^\infty_{\epsilon}(x)$}{infinity-norm ball}}
\label{sec:bola}
Let $\B_\epsilon^p(x)$ be the (closed) ball with center $x \in\Delta_d^{\downarrow}$ and radius $\epsilon > 0$ inside $\pset$, that is,
\begin{equation} \label{eq:ball}
  \mathcal{B}^p_{\epsilon}(x) = \left\{ x' \in\Delta_d^{\downarrow}\,\colon\, \| x' - x\|_p \le \epsilon \right\},
\end{equation}
where $\|x\|_p = \left(\sum^d_{i=1} |x_i|^p\right)^{\frac{1}{p}}$ with $p \geq 1$, and a limiting case $\|x\|_\infty := \lim_{p \rightarrow \infty} \|x\|_p = \max\left\{|x_i|\right\}^d_{i=1}$. 
%We call these sets $\B_\epsilon^p(x)$ as the $\ell^p$-norm $\epsilon$-ball of center $x$. 
Here, we are interested in characterizing the order-theoretic properties of these balls with respect to the majorization relation.
In particular, we aim to find whether these balls admit extremal probabilities in the sense of majorization.

Our first result is that the balls $\B^p_\epsilon(x)$ with $1<p<\infty$ do not have extremal probabilities in general.
In others words, there exist probability vectors $x \in \Pset$, for which one can always find a radius $\epsilon$ sufficiently small such that the maximum or minimum of $\B_\epsilon^p(x)$ do not exist. More precisely, we obtain the following result.

\begin{theorem}
	\label{th:nominmax}	
	Let $x\in\Delta_d^{\downarrow}$, $\epsilon>0$ and $1<p<\infty$ such that $\B_{\epsilon}^p(x) \cap \partial \Pset = \emptyset$. Then, the supremum and infimum of the ball $\B_{\epsilon}^p(x)$, which are, $\bigvee \B_{\epsilon}^p(x)$ and $\bigwedge \B_{\epsilon}^p(x)$, do not belong to the ball. In other words, the maximum and the minimum of $\B_{\epsilon}^p(x)$ do not exist.
\end{theorem}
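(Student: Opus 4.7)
The plan is to reduce the question to linear optimization: a maximum element of $\B_\epsilon^p(x)$ would have to simultaneously maximize all $d-1$ partial-sum functionals $s_k$ over the ball, and analogously the minimum would have to simultaneously minimize them. Strict convexity of the $\ell^p$-ball for $1<p<\infty$ makes each such optimizer unique, and the unique maximizers (resp.\ minimizers) for different $k$ turn out to be pairwise distinct, yielding the contradiction.

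First I would observe that, since $\B_\epsilon^p(x)\cap\partial\pset=\emptyset$, the ball lies in the relative interior of $\pset$: every $x'\in\B_\epsilon^p(x)$ satisfies $x'_i>x'_{i+1}$ and $x'_i>0$ strictly, so the sortedness and nonnegativity constraints are locally inactive. Near any point of $\B_\epsilon^p(x)$, the ball coincides with the intersection of the closed $\ell^p$-ball of radius $\epsilon$ centered at $x$ with the affine hyperplane $H=\{x':\sum_i x'_i=1\}$. For $1<p<\infty$ the $\ell^p$-ball is strictly convex, as is its intersection with $H$, and the resulting set is compact; hence every linear functional on $\B_\epsilon^p(x)$ attains its maximum and its minimum at unique points.

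Next, I would fix $k\in\{1,\ldots,d-1\}$ and write $m_k^+$ for the unique maximizer of $s_k(x')=\sum_{i=1}^k x'_i$ over $\B_\epsilon^p(x)$. The Karush--Kuhn--Tucker conditions, in which only the $\ell^p$ and simplex constraints are active (by the previous step), are equivalent to the equality case of H\"older's inequality applied to $\langle e_{[k]}-\lambda\mathbf{1},x'-x\rangle$ and yield $(m_k^+-x)_i=+c_k(1-\lambda_k)^{1/(p-1)}$ for $i\leq k$ and $(m_k^+-x)_i=-c_k\lambda_k^{1/(p-1)}$ for $i>k$, where $\lambda_k\in(0,1)$ is the unique root of $k(1-\lambda)^{1/(p-1)}=(d-k)\lambda^{1/(p-1)}$ coming from $\sum_i(m_k^+-x)_i=0$, and $c_k>0$ is fixed by $\|m_k^+-x\|_p=\epsilon$. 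The unique minimizer $m_k^-$ is obtained by flipping the signs of these entries. In particular, if $k_1<k_2$ and $i$ satisfies $k_1<i\leq k_2$, then $(m_{k_1}^+-x)_i<0$ while $(m_{k_2}^+-x)_i>0$, so the vectors $m_1^+,\ldots,m_{d-1}^+$ are pairwise distinct; likewise for the minimizers.

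Finally, suppose for contradiction that $x^{\sup}:=\bigvee\B_\epsilon^p(x)$ lies in $\B_\epsilon^p(x)$. Since $x^{\sup}\maj x'$ for every $x'\in\B_\epsilon^p(x)$, we have $s_k(x^{\sup})\geq s_k(x')$ for every $k$ and every $x'$; combined with $x^{\sup}\in\B_\epsilon^p(x)$, this forces $s_k(x^{\sup})=s_k(m_k^+)$ for each $k=1,\ldots,d-1$. Two vectors of $\Delta_d$ with the same partial sums $s_1,\ldots,s_{d-1}$ (together with $s_d=1$) are identical, so $x^{\sup}=m_k^+$ for every $k$, contradicting pairwise distinctness. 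The argument for $\bigwedge\B_\epsilon^p(x)\notin\B_\epsilon^p(x)$ is identical, replacing $m_k^+$ by $m_k^-$. The main obstacle is verifying that strict convexity genuinely delivers uniqueness inside $\B_\epsilon^p(x)$: this is exactly the role of the hypothesis $\B_\epsilon^p(x)\cap\partial\pset=\emptyset$, which strips away the sortedness and positivity constraints so that the Lagrangian with only simplex and $\ell^p$ constraints is valid and the computation of $m_k^\pm$ goes through.
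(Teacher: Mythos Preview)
Your argument is correct and proceeds along a different line from the paper's. Both hinge on strict convexity of the $\ell^p$-ball for $1<p<\infty$, but the paper exploits it geometrically: assuming $\overline{x}^{(\epsilon)}:=\bigvee\B_\epsilon^p(x)$ lies in the ball, each facet of the polytope $\{x':s_k(x')\le s_k(\overline{x}^{(\epsilon)})\ \forall k\}\cap\Pset$ can meet the ball only at $\overline{x}^{(\epsilon)}$ and is therefore tangent there; this places every edge direction $e_i-e_{i+1}$ in the tangent space of the $\ell^p$-sphere at $\overline{x}^{(\epsilon)}$, forcing $\overline{x}^{(\epsilon)}-x$ to be a multiple of $(1,\dots,1)$ and hence zero---a contradiction. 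You instead compute the optimizers explicitly via H\"older/KKT and show that the unique maximizers $m_k^+$ of the $d-1$ partial-sum functionals are pairwise distinct, so no single element of the ball can coincide with all of them. Your route delivers the explicit form of the extremizers as a by-product; the paper's avoids the Lagrangian computation entirely.

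One small repair: the clause ``two vectors of $\Delta_d$ with the same partial sums are identical, so $x^{\sup}=m_k^+$'' is not the right justification---you have only established $s_k(x^{\sup})=s_k(m_k^+)$ at the single index $k$, not at all indices. What you should invoke is the uniqueness you already proved: $x^{\sup}\in\B_\epsilon^p(x)$ attains the maximum of $s_k$ over a strictly convex set, hence $x^{\sup}=m_k^+$ directly. Note also that both your argument and the paper's tacitly require $d\ge 3$; for $d=2$ the ball is an interval in a totally ordered $\Pset$ and does admit a maximum and a minimum.
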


Interesting enough, this is not necessarily true for the closed $\ell^p$-norm balls with $p=1$ and $p=\infty$. 
We now provide a step forward in the order-theoretic characterization of these balls.
Let us first note that for
$p=1$ and $p=\infty$, $\B^p_\epsilon(x)$ is a convex polytope, 
since it is a translation and scaling of the convex polytope $\{x' \in \mathbb{R}^d: \|x'\|_p \leq 1 \}$ intersected with the $(d-1)$-simplex (\eg \cite{RockafellarBook}). 
Then, according to Lemma~\ref{lemma:convexplytope} the supremum  and infimum can be obtained from the vertices, $ \{ v^n \}^N_{n=1}$, of $\B^p_\epsilon(x)$, that is, $\bigvee \B^p_\epsilon(x) = \bigvee \{ v^n \}^N_{n=1}$  and $\bigwedge \B^p_\epsilon(x) = \bigwedge  \{ v^n \}^N_{n=1}$. 
This characterization is not useful enough to obtain the supremum and infimum, since one needs to know \textit{a priori} the vertices of $\B^p_\epsilon(x)$. 
Notwithstanding, for $p=1$, it has recently been shown not only how to compute supremum and infimum, but also that they are the maximum and minimum of the ball $\B^1_\epsilon(x)$~\cite{Horodecki2018,Hanson2018}.
Here, we complete the order-theoretic characterization of the balls by analyzing the case $p=\infty$, showing that $\B^\infty_\epsilon(x)$ also admits extremal probability vectors.
These results indicate that the existence of the maximum and minimum strongly depends on the geometry of the ball under consideration.

Hereafter, we focus on the $\ell^\infty$-norm $\epsilon$-ball with center $x$.
To show the existence of $\max \B^\infty_\epsilon(x)$ and $\min \B^\infty_\epsilon(x)$, without appealing to the vertices of $\B^\infty_\epsilon(x)$, we provide the following theorem that states $\B^\infty_\epsilon(x)$ inherits the lattice structure of majorization. Precisely, it is a complete sublattice of the majorization lattice.
\begin{theorem} \label{th:sublattice}
Let $x \in \pset$ and $\epsilon>0$. Then, the quadruple $\left\langle \B^\infty_\epsilon(x), \succeq, \vee, \wedge \right\rangle$ is a complete sublattice of the majorization lattice $\langle \Pset, \succeq, \vee, \wedge \rangle$.
\end{theorem}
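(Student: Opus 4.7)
The plan is to apply Lemma~\ref{lemma:completnessmaj} and verify that, for every nonempty $\mathcal{P}\subseteq\B^\infty_\epsilon(x)$, both the majorization infimum $\bigwedge\mathcal{P}$ and supremum $\bigvee\mathcal{P}$ lie in $\B^\infty_\epsilon(x)$. The lemma already places these two vectors in $\pset$, so membership in $\B^\infty_\epsilon(x)$ reduces to the componentwise constraint $|z_k-x_k|\le\epsilon$. The proof therefore boils down to controlling the $k$-th increment of the partial-sum sequences supplied by the lemma.

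For the infimum, set $a_k=\inf_{y\in\mathcal{P}}s_k(y)$, so that $(\bigwedge\mathcal{P})_k=a_k-a_{k-1}$. Every $y\in\B^\infty_\epsilon(x)$ satisfies $y_k=s_k(y)-s_{k-1}(y)\in[x_k-\epsilon,x_k+\epsilon]$, equivalently $s_{k-1}(y)+x_k-\epsilon\le s_k(y)\le s_{k-1}(y)+x_k+\epsilon$. Taking the infimum over $y$ on each side (legitimate because $x_k\pm\epsilon$ are constants) yields $a_{k-1}+x_k-\epsilon\le a_k\le a_{k-1}+x_k+\epsilon$, i.e.\ $(\bigwedge\mathcal{P})_k\in[x_k-\epsilon,x_k+\epsilon]$, as required.

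For the supremum let $b_k=\sup_{y\in\mathcal{P}}s_k(y)$ and $\bar{x}_k=b_k-b_{k-1}$. The symmetric argument shows $\bar{x}_k\in[x_k-\epsilon,x_k+\epsilon]$, but $\bar{x}$ need not be sorted, and Lemma~\ref{lemma:completnessmaj} therefore prescribes the upper concave envelope $\tilde{L}$ of the polygonal curve through $\{(k,b_k)\}$. On each maximal linear piece of $\tilde{L}$ from $(i,b_i)$ to $(j,b_j)$ we have $(\bigvee\mathcal{P})_l=A:=(b_j-b_i)/(j-i)$ for every $l\in[i+1,j]$. Because $\tilde{L}\ge b$ on $[i,j]$ with equality at the endpoints, evaluating at $l=i+1$ gives $b_{i+1}\le b_i+A$ and at $l=j-1$ gives $b_{j-1}\le b_j-A$, yielding the sandwich $\bar{x}_{i+1}\le A\le \bar{x}_j$. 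Combining with the componentwise bounds on $\bar{x}_{i+1}$ and $\bar{x}_j$ and with the monotonicity $x_{i+1}\ge x_l\ge x_j$ for $l\in[i+1,j]$ gives
\[
x_l-\epsilon\le x_{i+1}-\epsilon\le\bar{x}_{i+1}\le A\le \bar{x}_j\le x_j+\epsilon\le x_l+\epsilon,
\]
so $(\bigvee\mathcal{P})_l$ also stays within $\epsilon$ of $x_l$.

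The main technical obstacle is precisely this envelope step: a priori, averaging $\bar{x}_l$ over a merging block could push the common value $A$ outside the intersection $\bigcap_{l\in[i+1,j]}[x_l-\epsilon,x_l+\epsilon]=[x_{i+1}-\epsilon,x_j+\epsilon]$. The resolution is the sandwich inequality above, which says that the \emph{first} and \emph{last} increments of the block—each already inside its own $\epsilon$-interval—are enough to confine $A$ to the required intersection, irrespective of how the intermediate increments oscillate. All remaining checks (the membership in $\pset$, the case of vertices where no linearization occurs, and the passage from finite to arbitrary $\mathcal{P}$) follow directly from Lemma~\ref{lemma:completnessmaj}.
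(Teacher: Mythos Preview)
Your proof is correct and follows essentially the same approach as the paper: both arguments control the increments of the infimum directly from the componentwise bounds, and for the supremum both exploit that on each linear piece of the concave envelope the common slope is sandwiched between the first and last increments $\bar{x}_{i+1}$ and $\bar{x}_j$, which are already within $\epsilon$ of $x_{i+1}$ and $x_j$. Your presentation is slightly cleaner in that establishing $\bar{x}_k\in[x_k-\epsilon,x_k+\epsilon]$ up front via monotonicity of $\sup$ lets you avoid the $\delta$-approximation the paper uses for the lower bound on $x^{\sup}_k-x_k$.
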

This is an interesting order-theoretic result in itself that means for any $\B \subseteq  \B^\infty_\epsilon(x)$, one has that the supremum and the infimum of $\B$ belong to the ball, that is,
$\bigvee \B \in \B_\epsilon^\infty(x)$ and $\bigwedge \B \in \B_\epsilon^\infty(x)$, respectively.
As a consequence, the supremum and the infimum of $\B^\infty_\epsilon(x)$ belong to the ball, which means $\bigvee \B_\epsilon^\infty(x)$ and $\bigwedge \B_\epsilon^\infty(x)$ are indeed the maximum ($\max \B_\epsilon^\infty(x)$) and the minimum ($\min \B_\epsilon^\infty(x)$) of $\B^\infty_\epsilon(x)$, respectively. 
The following two theorems provide the algorithms to compute the maximum and minimum of $\B_\epsilon^\infty(x)$, respectively.

\begin{theorem}[Maximum] \label{th:max}
Let $x\in\Delta_d^{\downarrow}$, $\epsilon>0$ and $\xmax_\infty$
be a probability vector defined as
\[
\xmax_\infty:= x + \nu,
\]
where
\begin{equation} \label{eq:numax}
  \nu:= (\overbrace{\epsilon,\ldots,\epsilon}^{k_0-1},\epsilon-\delta, \overbrace{-\epsilon,\ldots,-\epsilon,}^{k_2-k_0} \overbrace{-x_{k_2+1},\ldots,-x_{d}}^{d-k_2}),
\end{equation}
$k_2 := \max \{k_0,k_1\}$, $\delta:=  k_0\epsilon - f(k_0)$, $1\leq k_0,k_1,k_2 \leq d$ and
\begin{align*}
  k_1 &:=\max \mathcal{K}_1, \quad \text{with} \;\; \mathcal{K}_1:=\{k\,\colon\, x_k\ge \epsilon\}\cup\{1\}, \label{eq:k1} \\
  k_0 &:=\min\mathcal{K}_0, \quad \text{with} \;\; \mathcal{K}_0:=\{k\,\colon\,f(k)\le k\epsilon\},\\
  f(k)&:=
    \begin{cases}
      1-s_k(x), & \text{if}\;k\ge k_1\\
      1-s_{k_1}(x)+(k_1-k)\epsilon, & \text{if}\;k\le k_1
    \end{cases}
\end{align*}
Then,
\begin{equation}
\max \B^\infty_{\epsilon}(x) = \xmax_\infty.
\end{equation}
\end{theorem}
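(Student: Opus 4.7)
The plan is to verify two things: (i) $\xmax_\infty$ lies in $\B^\infty_\epsilon(x)$, and (ii) $\xmax_\infty \succeq y$ for every $y \in \B^\infty_\epsilon(x)$. Once both are established, Theorem~\ref{th:sublattice} guarantees that $\B^\infty_\epsilon(x)$ admits a maximum, and these two properties force it to coincide with $\xmax_\infty$. As a preliminary, we record that both indices are well defined: $1 \in \mathcal{K}_1$ by construction, while $f(d) = 1 - s_d(x) = 0 \leq d\epsilon$ shows $d \in \mathcal{K}_0$. We also note $x_i < \epsilon$ for $i > k_1$, which is used repeatedly.

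For admissibility, the inclusion $k_0 \in \mathcal{K}_0$ gives $\delta = k_0\epsilon - f(k_0) \geq 0$, hence $\epsilon - \delta \leq \epsilon$. The reverse bound $\delta \leq 2\epsilon$ exploits the minimality of $k_0$: when $k_0 \geq 2$, the fact $k_0 - 1 \notin \mathcal{K}_0$ gives $f(k_0 - 1) > (k_0 - 1)\epsilon$, and comparing $f(k_0 - 1)$ with $f(k_0)$ in the two regimes $k_0 - 1 \geq k_1$ (difference equals $x_{k_0} < \epsilon$) and $k_0 - 1 < k_1$ (difference equals $\epsilon$) yields $f(k_0) > (k_0 - 2)\epsilon$, so $\delta < 2\epsilon$; the boundary case $k_0 = 1$ is handled directly using $f(1) \geq 0$. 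With $\delta \in [0, 2\epsilon]$ and $x_i < \epsilon$ for $i > k_2 \geq k_1$, every entry of $\nu$ lies in $[-\epsilon, \epsilon]$, so $\|\nu\|_\infty \leq \epsilon$. A direct computation, split according to whether $k_2 = k_0$ or $k_2 = k_1 > k_0$, gives $\sum_i \nu_i = 0$, hence $\xmax_\infty \in \Delta_d$. Non-negativity reduces to checking the $k_0$-th component: when $k_0 < k_1$ we have $x_{k_0} \geq \epsilon$ and $\delta \leq 2\epsilon$, so $x_{k_0} + \epsilon - \delta \geq 0$; when $k_0 \geq k_1$, the inequality $1 - s_{k_0 - 1}(x) \geq (k_0 - 1)\epsilon$ follows from the minimality of $k_0$ and is equivalent to $x_{k_0} + \epsilon - \delta \geq 0$. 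The non-increasing ordering of $\xmax_\infty$ is then immediate from that of $x$ together with $\delta \in [0, 2\epsilon]$.

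Having placed $\xmax_\infty$ in the ball, we compute its partial sums explicitly: $s_k(\xmax_\infty) = s_k(x) + k\epsilon$ for $k < k_0$; $s_k(\xmax_\infty) = s_k(x) + f(k_0) - (k - k_0)\epsilon$ for $k_0 \leq k \leq k_2$; and $s_k(\xmax_\infty) = 1$ for $k \geq k_2$. To prove majorization, fix $y \in \B^\infty_\epsilon(x)$. For $k < k_0$, the entrywise bound $y_i \leq x_i + \epsilon$ gives $s_k(y) \leq s_k(x) + k\epsilon = s_k(\xmax_\infty)$. For $k \geq k_2$ the estimate $s_k(y) \leq 1 = s_k(\xmax_\infty)$ is trivial. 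In the intermediate range $k_0 \leq k < k_2$ (which only arises when $k_0 < k_1 = k_2$), we write $s_k(y) = 1 - \sum_{i > k} y_i$ and use the lower bounds $y_i \geq \max(0, x_i - \epsilon)$, noting that $x_i \geq \epsilon$ precisely for $i \leq k_1$, to obtain $\sum_{i > k} y_i \geq \sum_{i = k+1}^{k_1}(x_i - \epsilon)$, which rearranges exactly to $s_k(y) \leq s_k(\xmax_\infty)$.

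The main obstacle is the bookkeeping in the admissibility step: both $\delta \leq 2\epsilon$ and the non-negativity of the $k_0$-th component of $\xmax_\infty$ hinge delicately on the minimality of $k_0$, and require separate treatment in the cases $k_0 \leq k_1$ and $k_0 > k_1$, as well as careful handling of the boundary case $k_0 = 1$. Once these geometric facts are in place, the comparison of partial sums in the majorization step is a direct computation.
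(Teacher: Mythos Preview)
Your proof is correct, but it follows a different route from the paper's. The paper invokes Theorem~\ref{th:sublattice} to know \emph{a priori} that a maximum $x+\nu^{\max}$ exists, and then establishes $\nu^{\max}=\nu$ by a squeeze on partial sums: the componentwise bounds $\max\{-x_k,-\epsilon\}\le \nu^{\max}_k\le \epsilon$ force $s_k(\nu^{\max})\le s_k(\nu)$ for every $k$, while the reverse inequality comes from the fact that the maximum must majorize $x+\nu$ (implicitly using that $x+\nu$ lies in the ball). You instead prove both claims directly: membership of $\xmax_\infty$ in $\B^\infty_\epsilon(x)$ with full bookkeeping on $\delta\in[0,2\epsilon]$, non-negativity, and the ordering of entries; and then majorization of every $y$ in the ball by a case split on the index $k$. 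Your argument is thus self-contained and does not actually need Theorem~\ref{th:sublattice} at all (once (i) and (ii) hold, $\xmax_\infty$ is the maximum by definition), whereas the paper's proof is shorter but leans on the sublattice result and leaves the verification that $\xmax_\infty\in\Pset$ implicit in the theorem statement. The trade-off is conciseness versus completeness: the paper's squeeze argument is elegant but sketchy on admissibility, while your approach fills precisely those gaps at the cost of more case analysis.
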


\begin{theorem}[Minimum] \label{th:min}
Let $x\in\Delta_d^{\downarrow}$, $\epsilon>0$ and $\xmin_\infty$ be a probability vector defined as
\begin{equation} \label{eq:min_elem}
  {\xmin_\infty}_k :=  \bar{L}(k) - \bar{L}(k-1),
\end{equation}
where $\bar{L}(\omega)$ is the upper envelope of the polygonal curve given by the linear interpolation of the points $\{(k, s_k(x - \nueps)) \}_{k=0}^d$ with
\begin{equation} \label{eq:nueps}
  \nueps :=
   \begin{cases}
   (\epsilon,\ldots,\epsilon,-\epsilon,\ldots,-\epsilon)&\text{ if }d\text{ even},\\
   (\epsilon,\ldots,\epsilon,0,-\epsilon,\ldots,-\epsilon)&\text{ if }d\text{ odd}.\\
   \end{cases}
\end{equation}
Then,
\begin{equation}
  \min \B^\infty_\epsilon(x) = \xmin_\infty.
\end{equation}
\end{theorem}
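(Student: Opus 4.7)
My plan is to combine the sublattice result of Theorem \ref{th:sublattice} with a direct computation of partial-sum bounds together with the concavity of Lorenz curves. Theorem \ref{th:sublattice} already guarantees that $\bigwedge \B^\infty_{\epsilon}(x)$ belongs to the ball, so $\min\B^\infty_{\epsilon}(x)$ exists; it therefore suffices to exhibit a candidate $y\in\B^\infty_{\epsilon}(x)$ such that every $x'\in\B^\infty_{\epsilon}(x)$ satisfies $x'\succeq y$, and I propose $y=\xmin_\infty$ as in the statement.

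The key order-theoretic step is a pair of one-line bounds on partial sums. For any $x'\in\B^\infty_{\epsilon}(x)$, the coordinate-wise inequalities $|x'_i-x_i|\le\epsilon$ yield
\[
s_k(x')\;\geq\;s_k(x)-k\epsilon\quad\text{and}\quad s_k(x')\;=\;1-\sum_{i>k}x'_i\;\geq\;s_k(x)-(d-k)\epsilon,
\]
so that $s_k(x')\geq s_k(x)-\min(k,d-k)\epsilon$. A direct inspection of \eqref{eq:nueps} shows that $s_k(x-\nueps)=s_k(x)-\min(k,d-k)\epsilon$ for both parities of $d$, so the polygonal curve $L$ through $\{(k,s_k(x-\nueps))\}_{k=0}^d$ satisfies $L_{x'}(k)\geq L(k)$ at every integer $k$, and by piecewise-linear interpolation also on all of $[0,d]$. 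Since $x'\in\pset$ its Lorenz curve $L_{x'}$ is concave, and because $\bar L$ is by construction the least concave majorant of $L$, this forces $L_{x'}(\omega)\geq\bar L(\omega)$ for every $\omega\in[0,d]$; in particular $s_k(x')\geq \bar L(k)=s_k(\xmin_\infty)$, which is exactly $x'\succeq\xmin_\infty$.

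The remaining task is to verify $\xmin_\infty\in\B^\infty_{\epsilon}(x)$. Membership in $\pset$ is essentially automatic: the components sum to $\bar L(d)-\bar L(0)=s_d(x-\nueps)=1$, they are non-increasing because $\bar L$ is concave, and they are non-negative because $\bar L$ is non-decreasing (which follows from $\bar L\ge L\ge 0$ together with concavity and the boundary conditions $\bar L(0)=0$, $\bar L(d)=1$). The genuine obstacle is the $\ell^{\infty}$-bound $\|\xmin_\infty-x\|_{\infty}\leq\epsilon$: on each linear piece of $\bar L$ between consecutive kinks $i_0<i_1$, the slope is the average $\frac{1}{i_1-i_0}\sum_{j=i_0+1}^{i_1}(x_j-\nueps_j)$, and one must show that this average stays within $[x_i-\epsilon,x_i+\epsilon]$ for every index $i\in(i_0,i_1]$. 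I would handle this by a case analysis depending on whether $[i_0,i_1]$ lies entirely in the first half of indices, entirely in the second half, or straddles the transition point of $\nueps$; in the straddling case one exploits the defining property of the upper envelope recalled in~\ref{app:upperenv}---that $i_0$ and $i_1$ are consecutive kinks precisely because the chord through $(i_0,L(i_0))$ and $(i_1,L(i_1))$ lies above $L$ on $[i_0,i_1]$---together with the monotonicity of $x$, to bound the distance from the average to any intermediate $x_i$ by $\epsilon$. This averaging estimate is the step I expect to be most delicate, and once it is in place the theorem follows from the preceding two paragraphs.
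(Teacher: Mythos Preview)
Your overall architecture matches the paper's exactly: prove that every $x'\in\B^\infty_\epsilon(x)$ majorizes $\xmin_\infty$ via the least-concave-majorant property of $\bar L$, and separately prove $\xmin_\infty\in\B^\infty_\epsilon(x)$. Your treatment of the first part and of membership in $\pset$ is the same as the paper's (and in fact a bit more explicit). The invocation of Theorem~\ref{th:sublattice} is harmless but unnecessary: once you exhibit an element of the ball that is majorized by every other element, it is automatically the minimum.

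Where you diverge is in the $\ell^\infty$-bound $\|\xmin_\infty-x\|_\infty\le\epsilon$, and here the paper's argument is considerably simpler than the case analysis you are contemplating. Fix $k$ and let $k_1<k\le k_0$ be the consecutive kinks of $\bar L$ bracketing $k$, so that on $[k_1,k_0]$ the slope is constant and equal to ${\xmin_\infty}_k$. Since $k_0$ is a kink, $\bar L(k_0)=s_{k_0}(x-\nueps)$, while $\bar L(k_0-1)\ge s_{k_0-1}(x-\nueps)$; hence
\[
{\xmin_\infty}_k=\bar L(k_0)-\bar L(k_0-1)\le (x-\nueps)_{k_0}=x_{k_0}-\nueps_{k_0}\le x_{k_0}+\epsilon\le x_k+\epsilon,
\]
using only $|\nueps_{k_0}|\le\epsilon$ and $x_{k_0}\le x_k$. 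Symmetrically, since $k_1$ is a kink and $\bar L(k_1+1)\ge s_{k_1+1}(x-\nueps)$,
\[
{\xmin_\infty}_k=\bar L(k_1+1)-\bar L(k_1)\ge (x-\nueps)_{k_1+1}=x_{k_1+1}-\nueps_{k_1+1}\ge x_{k_1+1}-\epsilon\ge x_k-\epsilon.
\]
No distinction between the first half, second half, or straddling segments is needed: evaluating the constant slope at the two ends of the linear piece and using $\bar L\ge L_{x-\nueps}$ at the adjacent integer, together with the monotonicity of $x$, gives both inequalities uniformly. This is exactly the chord-above-curve principle you invoke, but applied at single points $k_0-1$ and $k_1+1$ rather than averaged over the whole segment, which is what makes the argument short.
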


Several simple cases can be noted. For example, if $\|e_1 - x\|_{\infty} \leq \epsilon$, then the vector~\eqref{eq:numax} reduces to $\nu = (1-x_1,-x_2, \ldots, -x_d)$ and $\xmax_\infty = e_1$, as expected. On the other hand, when $x + \nueps \in \B^\infty_\epsilon(x)$, with $\nueps$ given in~\eqref{eq:nueps}, one has $\nu = \nueps$, hence $\xmax_\infty = x + \nueps$. For the minimum, if $\|e_d - x\|_{\infty} \leq \epsilon$ then the construction in equation~\eqref{eq:min_elem} directly gives $\xmin_\infty = e_d$, as expected. Another simplification occurs when $x - \nueps \in \B^\infty_\epsilon(x)$. In this case, $\xmin_\infty = x - \nueps$. 
Given that the ball $\B^\infty_\epsilon(x)$ is defined by the intersection with the ordered simplex $\Delta^\downarrow_d$, there is always a large enough $\epsilon$ such that $x\pm \nu^{(\epsilon)}$ does not belong to the ball.

%%%%%%%% properties
\medskip

As direct consequences of Theorem~\ref{th:min}, we can prove the following two corollaries. The first one, states that the minimal distributions of two balls with the same radius but centers that are related by a majorization relation, preserve such majorization relation. The second one, says that the minimum of a ball of a given radius can be obtained iterating the construction with two radius that sum to the original one.

\begin{corollary} \label{cor:preserveorder}
Let $x,y \in \Pset$ be two centers such that $x \succeq y$ and $\epsilon > 0$. Then,
\begin{equation}
  \xmin_\infty \succeq \underline{y}^{(\epsilon)}_\infty.
\end{equation}
\end{corollary}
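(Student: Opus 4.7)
The plan is to exploit the fact that the perturbation vector $\nueps$ defined in equation~\eqref{eq:nueps} depends only on the dimension $d$ and the radius $\epsilon$, and not on the center of the ball. Consequently, for every $k \in \{0, 1, \ldots, d\}$,
\begin{equation*}
s_k(x - \nueps) - s_k(y - \nueps) = s_k(x) - s_k(y) \geq 0,
\end{equation*}
where the inequality is just the hypothesis $x \succeq y$. So the sample points defining the two polygonal curves used in Theorem~\ref{th:min} to compute $\xmin_\infty$ and $\underline{y}^{(\epsilon)}_\infty$ are ordered pointwise, with matching values at the endpoints $k=0$ and $k=d$ (both equal to $0$ and $1$ respectively, since $\nueps$ sums to zero).

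The core step is then a monotonicity lemma for the upper-envelope (concave-majorant) operator: if two families $\{(k,a_k)\}_{k=0}^d$ and $\{(k,b_k)\}_{k=0}^d$ satisfy $a_k \geq b_k$ for every $k$, then their concave upper envelopes $\bar{L}_a$ and $\bar{L}_b$ satisfy $\bar{L}_a(\omega) \geq \bar{L}_b(\omega)$ for every $\omega \in [0,d]$. The argument is short: $\bar{L}_a$ is concave with $\bar{L}_a(k) \geq a_k \geq b_k$ at every integer, so by concavity $\bar{L}_a$ dominates the piecewise-linear interpolation of the $b_k$'s on each interval $[k,k+1]$; hence $\bar{L}_a$ is a concave majorant of the polygonal curve through the $b_k$'s, and since $\bar{L}_b$ is by definition the least such majorant, $\bar{L}_a \geq \bar{L}_b$.

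Applying this lemma with $a_k = s_k(x - \nueps)$ and $b_k = s_k(y - \nueps)$, and writing $\bar{L}_x,\bar{L}_y$ for the corresponding upper envelopes, we get $\bar{L}_x(k) \geq \bar{L}_y(k)$ at every integer $k$. By Theorem~\ref{th:min}, the partial sums of $\xmin_\infty$ and $\underline{y}^{(\epsilon)}_\infty$ are precisely $\bar{L}_x(k)$ and $\bar{L}_y(k)$. Since both vectors belong to $\pset$ (they are minima of subsets of the ordered simplex) and both sum to one, Definition~\ref{def:majorization} immediately yields $\xmin_\infty \succeq \underline{y}^{(\epsilon)}_\infty$.

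The only non-bookkeeping content is the monotonicity of the concave-majorant operator, which is the main obstacle; once it is in hand, the rest reduces to observing that $\nueps$ enters the construction of Theorem~\ref{th:min} only additively and independently of the center, so majorization of the centers is inherited by the shifted partial-sum sequences and then by their concave envelopes.
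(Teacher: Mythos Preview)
Your proof is correct and follows essentially the same route as the paper's: both observe that $s_k(x-\nueps)\geq s_k(y-\nueps)$ and then invoke monotonicity of the upper-envelope operator to conclude via Theorem~\ref{th:min}. The only difference is that the paper simply asserts that ``their respective upper envelopes preserve this order'' without justification, whereas you supply a short argument for this monotonicity lemma; your version is thus slightly more self-contained.
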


\begin{corollary} \label{cor:semigroup}
Let $x\in \Pset$ and two radius $\epsilon_1, \epsilon_2 > 0$. Then,
\begin{equation} \label{eq:semigroup}
  \underline{x}^{(\epsilon_1+\epsilon_2)}_\infty = \underline{\left(\underline{x}^{(\epsilon_2)}_\infty\right)}^{(\epsilon_1)}_\infty.
\end{equation}
\end{corollary}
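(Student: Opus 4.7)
Let $a := \underline{x}^{(\epsilon_2)}_\infty$, $b := \underline{a}^{(\epsilon_1)}_\infty$ and $c := \underline{x}^{(\epsilon_1+\epsilon_2)}_\infty$; the goal is to show $b = c$. My plan is to prove both $c \preceq b$ (easy, by the triangle inequality for the $\ell^\infty$-norm) and $b \preceq c$ (harder, via Theorem~\ref{th:min} and a sub-additivity identity for the least concave majorant, henceforth $\mathrm{LCM}$); antisymmetry of $\succeq$ on $\pset$ then yields equality.

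For the easy direction, the triangle inequality gives $\B^\infty_{\epsilon_1}(a) \subseteq \B^\infty_{\epsilon_1+\epsilon_2}(x)$, since $\|b' - x\|_\infty \le \|b' - a\|_\infty + \|a - x\|_\infty \le \epsilon_1 + \epsilon_2$ whenever $b' \in \B^\infty_{\epsilon_1}(a)$. Consequently $c = \min \B^\infty_{\epsilon_1+\epsilon_2}(x) \preceq b$.

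For the reverse direction I would work with Lorenz curves. A direct check of~\eqref{eq:nueps} shows that $\nueps$ is additive in $\epsilon$, namely $\nu^{(\epsilon_1+\epsilon_2)} = \nu^{(\epsilon_1)} + \nu^{(\epsilon_2)}$, regardless of the parity of $d$. Setting $f(k) := s_k(x - \nu^{(\epsilon_2)})$ and $h(k) := s_k(\nu^{(\epsilon_1)})$, Theorem~\ref{th:min} identifies $s_k(c)$ with $\mathrm{LCM}(f - h)(k)$ and $s_k(b)$ with $\mathrm{LCM}(\mathrm{LCM}(f) - h)(k)$. The key structural observation is that $h$ is itself concave on $\{0, 1, \ldots, d\}$: the components of $\nu^{(\epsilon_1)}$ are non-increasing, so their partial sums form a piecewise-linear concave (tent-shaped) function; in particular $\mathrm{LCM}(h) = h$.

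The main obstacle reduces to the identity $\mathrm{LCM}(f - h) = \mathrm{LCM}(\mathrm{LCM}(f) - h)$ whenever $h$ is concave. The inequality $\le$ follows from $f \le \mathrm{LCM}(f)$. For $\ge$, I would invoke sub-additivity of the $\mathrm{LCM}$: the sum of two concave functions is a concave majorant of the sum of the originals, so $\mathrm{LCM}(g_1 + g_2) \le \mathrm{LCM}(g_1) + \mathrm{LCM}(g_2)$. Taking $g_1 = f - h$ and $g_2 = h$ and using $\mathrm{LCM}(h) = h$ yields $\mathrm{LCM}(f) - h \le \mathrm{LCM}(f - h)$; applying $\mathrm{LCM}$ to both sides and using that $\mathrm{LCM}(f - h)$ is already concave closes the argument. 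Equality of Lorenz curves, together with $b, c \in \pset$, then forces $b = c$.
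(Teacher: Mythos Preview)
Your proof is correct and takes a different route from the paper. The paper argues locally: it fixes an interval $[k,k']$ on which the Lorenz curve of $a=\underline{x}^{(\epsilon_2)}_\infty$ is affine, observes that $L_{x-\nu^{(\epsilon_1+\epsilon_2)}}$ and $L_{a-\nu^{(\epsilon_1)}}$ agree at the endpoints with the former lying below and the latter convex there, and concludes that their upper envelopes coincide on each such interval. You instead isolate the abstract identity $\mathrm{LCM}(f-h)=\mathrm{LCM}(\mathrm{LCM}(f)-h)$ for concave $h$, derived from sub-additivity of the least concave majorant, and feed in the additivity $\nu^{(\epsilon_1+\epsilon_2)}=\nu^{(\epsilon_1)}+\nu^{(\epsilon_2)}$ together with the concavity of $k\mapsto s_k(\nu^{(\epsilon_1)})$. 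Your packaging is cleaner and more reusable: it names exactly which property of $\nu^{(\epsilon)}$ drives the result (non-increasing components, hence concave partial sums), whereas the paper's interval-by-interval argument is more geometric but hides this. One minor remark: your triangle-inequality ``easy direction'' is subsumed by the LCM identity, which already yields full equality of the Lorenz curves; it is not wrong, just redundant.
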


Unfortunately, in general, the above properties does not hold for the maximum. Let us see some counterexamples.
\begin{example}
Let $x=(0.5,0.3,0.2,.0)$, $y=(0.5,0.3,0.1,0.1)$ and $\epsilon=0.1$. Notice that $x \succeq y$, but $\xmax_\infty=(0.6,0.3,0.1,0) \nsucceq \overline{y}^{(\epsilon)}_\infty=(0.6,0.4,0,0)$.
\end{example}

\begin{example}
Let $x=(0.5,0.2,0.15,0.1,0.05)$, $\epsilon_1=0.1$ and $\epsilon_2=0.05$. Then, $\overline{x}^{(\epsilon_1+\epsilon_2)}_\infty=(0.65,0.35, 0, 0, 0) \neq \overline{\left(\overline{x}^{(\epsilon_2)}_\infty\right)}^{(\epsilon_1)}_\infty = (0.65, 0.3, 0.05, 0, 0)$, with $\overline{x}^{(\epsilon_2)}_\infty = (0.55, 0.25, 0.15, 0.05, 0)$.
\end{example}

%%%
\subsection{Special cases: \texorpdfstring{$d=3$}{d=3} and admissible pairs}
Here, we address two special cases. In the first example, we consider three-dimensional probability vectors ($d=3$), whereas in the second, we restrict to certain values of the pair $(x,\epsilon)$.

For $d=3$, notice that all the results obtained by using the $\ell^\infty$-norm, can be directly translated to results using the $\ell^1$-norm, and vice versa, since we have the following equivalence between the corresponding balls.

\begin{theorem}
	\label{prop:equivballs}
	Let $x\in\Delta_3^\downarrow$ and $\epsilon>0$. Then,
	\begin{equation} \label{eq:equivalenceballs}
	\B^1_{\epsilon}(x) = \B^\infty_{\frac{\epsilon}{2}}(x).
	\end{equation}
\end{theorem}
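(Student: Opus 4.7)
The plan is to reduce the equality of balls to a purely elementary identity about vectors in $\mathbb{R}^3$ whose entries sum to zero. Since both $x$ and $x'$ lie in $\Delta_3^\downarrow \subseteq \Delta_3$, the difference $a := x' - x$ satisfies $a_1 + a_2 + a_3 = 0$. So the claim $\B^1_\epsilon(x) = \B^\infty_{\epsilon/2}(x)$ follows immediately once we establish

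\begin{equation*}
  \|a\|_1 = 2\,\|a\|_\infty \quad \text{for every } a \in \mathbb{R}^3 \text{ with } a_1+a_2+a_3=0,
\end{equation*}

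because then $\|x'-x\|_1 \le \epsilon$ is equivalent to $\|x'-x\|_\infty \le \epsilon/2$.

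To prove this identity, I would argue as follows. Let $M := \|a\|_\infty = \max_i |a_i|$ and pick an index $k$ with $|a_k| = M$. Swapping the sign of $a$ if needed, we may assume $a_k = M \ge 0$. Then the remaining two entries $a_i, a_j$ satisfy $a_i + a_j = -M$ together with $|a_i|, |a_j| \le M$. From $a_i \ge -M$ and $a_j = -M - a_i$, the constraint $a_j \ge -M$ forces $a_i \le 0$, and symmetrically $a_j \le 0$. Hence both $a_i$ and $a_j$ are non-positive, so $|a_i| + |a_j| = -a_i - a_j = M$. Adding $|a_k| = M$ yields $\|a\|_1 = 2M = 2\|a\|_\infty$.

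With the identity in hand, the theorem is immediate: for any $x' \in \Delta_3^\downarrow$, applying it to $a = x'-x$ (whose coordinates sum to $0$ since both $x$ and $x'$ are probability vectors) gives $\|x'-x\|_1 = 2\|x'-x\|_\infty$, and the two ball conditions $\|x'-x\|_1 \le \epsilon$ and $\|x'-x\|_\infty \le \epsilon/2$ coincide. I do not foresee a real obstacle here; the only subtle point is making explicit that the identity $\|a\|_1 = 2\|a\|_\infty$ is genuinely special to the zero-sum case in dimension $3$ (it already fails in dimension $4$, which is why the equivalence of balls cannot be extended to larger $d$ and why the paper must treat $\B^\infty_\epsilon(x)$ on its own for $d>3$).
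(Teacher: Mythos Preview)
Your proof is correct and follows essentially the same route as the paper: both reduce the equality of balls to the fact that any zero-sum vector $a\in\mathbb{R}^3$ satisfies $\|a\|_1 = 2\|a\|_\infty$. Your single WLOG argument (fixing the coordinate of maximal absolute value and deducing the other two share a sign) is a bit cleaner than the paper's four-case sign analysis, but the underlying idea is identical.
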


Now, we restrict our consideration to pairs $(x,\epsilon)$ satisfying the following definition.

\begin{definition} \label{def:admissible}
	Let $x\in\Pset$ and $\epsilon >0$. A pair $(x,\epsilon)$ satisfying the following conditions is called an \emph{admissible pair},
	\begin{enumerate}
		\item $1>x_1>x_2>\ldots>x_{d}>0$.
		\item $0<2\epsilon\le x_i-x_{i+1}$ for all $i \in \{0, \ldots,d \}$,
	\end{enumerate}
	where $x_0:=1$ and $x_{d+1}:=0$.
\end{definition}

\noindent%
From a geometrical point of view, this means that 
$\B_{\epsilon}^\infty(x) \cap \partial \Pset = \emptyset$,
that is to say that an admissible pair always defines a ball that is small enough to be completely contained within the ordered simplex $\Pset$. Notice that, although this is not the most general case, it is representative of the class of approximation problems. In fact, in the scenario where one looks for an approximation, $x$ will be typically not in the border of the ordered simplex, and $\epsilon$ is expected to be small enough such that the ball is strictly contained within the same set. In this case, the maximum and minimum are easily calculated by just adding or subtracting the vector $\nueps$.

\begin{corollary} \label{cor:addmisible_minmax}
	Let $(x, \epsilon)$ be an admissible pair. Then,
	\begin{equation} \label{eq:maxminapair}
	\xmax_\infty = x + \nueps \quad \text{and} \quad \xmin_\infty = x - \nueps.
	\end{equation}
\end{corollary}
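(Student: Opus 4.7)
The plan is to verify, under the admissibility conditions, that $x+\nueps$ and $x-\nueps$ are precisely the vectors delivered by Theorems~\ref{th:max} and~\ref{th:min}. The starting point is to show that $x\pm\nueps\in\B^\infty_\epsilon(x)\cap\Pset$. Membership in the ball is immediate since $\|\nueps\|_\infty=\epsilon$, and $\sum_i\nueps_i=0$ by the symmetry of~\eqref{eq:nueps}, so the total mass is preserved. The endpoint cases $i=0$ and $i=d$ of condition~(ii) of Definition~\ref{def:admissible} give $x_1\le 1-2\epsilon$ and $x_d\ge 2\epsilon$, which combined with $x_d\le x_i$ throughout imply $0\le(x\pm\nueps)_i\le 1$ for every~$i$. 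The only delicate check is the non-increasing ordering: inside each of the $+\epsilon$ and $-\epsilon$ blocks, consecutive differences of $x\pm\nueps$ equal $x_i-x_{i+1}\ge 2\epsilon$, while across the transition in the middle the difference is at worst $x_i-x_{i+1}-2\epsilon$ (for $d$ even) or $x_i-x_{i+1}-\epsilon$ (for $d$ odd), both non-negative by condition~(ii).

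For the maximum, I plug into Theorem~\ref{th:max}. Since $x_d\ge 2\epsilon>\epsilon$, every index lies in $\mathcal{K}_1$, so $k_1=d$, and $f(k)=1-s_d(x)+(d-k)\epsilon=(d-k)\epsilon$ for all $k\le d$. Then $\mathcal{K}_0=\{k:(d-k)\epsilon\le k\epsilon\}=\{k:k\ge d/2\}$, giving $k_0=\lceil d/2\rceil$, $k_2=\max\{k_0,k_1\}=d$, and $\delta=k_0\epsilon-f(k_0)=(2k_0-d)\epsilon$, which is $0$ for $d$ even and $\epsilon$ for $d$ odd. In~\eqref{eq:numax} the trailing block $(-x_{k_2+1},\dots,-x_d)$ is empty, and the middle entry $\epsilon-\delta$ becomes $\epsilon$ or $0$ according to the parity of~$d$, matching~\eqref{eq:nueps} exactly. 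Hence $\nu=\nueps$ and $\xmax_\infty=x+\nueps$.

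For the minimum, Theorem~\ref{th:min} builds $\xmin_\infty$ from the upper concave envelope of the Lorenz polygon of $x-\nueps$. But the first step already showed that $x-\nueps\in\Pset$, so its Lorenz polygon is concave and coincides with its own upper envelope; formula~\eqref{eq:min_elem} then returns ${\xmin_\infty}_k=(x-\nueps)_k$, i.e.\ $\xmin_\infty=x-\nueps$. The only mildly delicate step in the whole argument is the parity case split that fixes $\delta\in\{0,\epsilon\}$ in Theorem~\ref{th:max}; once this is handled, the rest is a direct substitution of the admissibility bounds into the piecewise definitions of $f$, $\mathcal{K}_0$, $\mathcal{K}_1$ and~$\nueps$.
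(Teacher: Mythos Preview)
Your proof is correct and follows essentially the same route as the paper's: compute $k_1$, $f$, $k_0$, $k_2$, and $\delta$ explicitly from Theorem~\ref{th:max} under the admissibility bounds to identify $\nu=\nueps$, and then observe that $x-\nueps\in\Pset$ so the upper-envelope construction in Theorem~\ref{th:min} is trivial. Your handling of the parity split for $\delta$ is in fact more careful than the paper's own proof, which states $\delta=\epsilon$ without distinguishing the even case $\delta=0$.
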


Admissible pairs also guarantees that the maximum satisfies similar properties to those given in Corollaries~\ref{cor:preserveorder} and \ref{cor:semigroup}.

\begin{corollary} \label{cor:preserveordermax}
	Let $(x, \epsilon)$ and $(y,\epsilon)$ be two admissible pairs such that $x \succeq y$. Then,
	\begin{equation}
	\xmax_\infty \succeq \overline{y}^{(\epsilon)}_\infty.
	\end{equation}
\end{corollary}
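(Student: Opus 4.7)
The plan is to reduce the claim directly to the defining inequalities of majorization via the explicit form of the maximum for admissible pairs provided by Corollary~\ref{cor:addmisible_minmax}. Since both $(x,\epsilon)$ and $(y,\epsilon)$ are admissible and share the same radius $\epsilon$, that corollary yields the clean identities $\xmax_\infty = x + \nueps$ and $\overline{y}^{(\epsilon)}_\infty = y + \nueps$, with the \emph{same} perturbation vector $\nueps$ defined in~\eqref{eq:nueps}. So the whole question collapses to showing that adding the common vector $\nueps$ preserves majorization, i.e.\ that $x + \nueps \succeq y + \nueps$.

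Next I would verify that the two candidate vectors already live in $\Pset$, so that Definition~\ref{def:majorization} can be applied directly to their partial sums without any reordering. This is precisely what admissibility is designed to give: condition (ii) in Definition~\ref{def:admissible} states $x_i - x_{i+1} \ge 2\epsilon$ for all consecutive indices (including the boundaries $x_0=1$, $x_{d+1}=0$), which implies both $x_i + \epsilon \ge x_{i+1}+\epsilon$ on the ``$+\epsilon$'' block of $\nueps$, $x_i - \epsilon \ge x_{i+1} - \epsilon$ on the ``$-\epsilon$'' block, and the crossover inequality at the middle index, while also keeping all components in $[0,1]$. The same holds for $y$. Hence $x+\nueps, y+\nueps \in \pset$.

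The last step is just arithmetic on partial sums. For any $k \in \{1,\dots,d-1\}$,
\begin{equation*}
s_k(x+\nueps) - s_k(y+\nueps) = \bigl(s_k(x)+s_k(\nueps)\bigr) - \bigl(s_k(y)+s_k(\nueps)\bigr) = s_k(x)-s_k(y) \ge 0,
\end{equation*}
where the final inequality is the hypothesis $x \succeq y$ applied at level $k$ (using that $x,y$ are already sorted). Therefore $\xmax_\infty = x+\nueps \succeq y+\nueps = \overline{y}^{(\epsilon)}_\infty$, which is the claim.

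I do not expect any genuine obstacle: the counterexample preceding the statement shows that the result \emph{fails} outside the admissible regime precisely because $\xmax_\infty$ ceases to be a simple translate of $x$ by $\nueps$ (the formula~\eqref{eq:numax} then activates the $k_0, k_1$ cutoff mechanism and becomes $x$-dependent), and so the ``common perturbation'' trick above breaks down. Admissibility therefore is not a technical convenience but is exactly the hypothesis that lets $\nueps$ play the role of a universal shift; once that is in hand, the proof is a one-line translation-invariance argument for partial sums.
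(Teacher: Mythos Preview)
Your proof is correct and follows essentially the same route as the paper: invoke Corollary~\ref{cor:addmisible_minmax} to write both maxima as translates by the common vector $\nueps$, and then observe that translation by a fixed vector preserves the partial-sum inequalities defining majorization. Your version is in fact a bit more careful, since you spell out why $x+\nueps$ and $y+\nueps$ lie in $\Pset$ so that no reordering is needed before comparing partial sums; the paper simply asserts that the majorization ``directly follows.''
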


\begin{corollary} \label{cor:semigroupmax}
	Let $x\in \Pset$ and two radius $\epsilon_1, \epsilon_2 > 0$. Then,
	\begin{equation} \label{eq:semigroupmax}
	\overline{x}^{(\epsilon_1+\epsilon_2)}_\infty = \overline{\left(\overline{x}^{(\epsilon_2)}_\infty\right)}^{(\epsilon_1)}_\infty.
	\end{equation}
\end{corollary}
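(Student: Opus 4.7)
The plan is to reduce everything to the explicit formula $\xmax_\infty = x + \nueps$ furnished by Corollary~\ref{cor:addmisible_minmax}. Since this subsection concerns admissible pairs, I read the hypothesis as assuming that $(x,\epsilon_1+\epsilon_2)$ is an admissible pair; note that admissibility of $(x,\epsilon)$ is monotone in $\epsilon$, so $(x,\epsilon_1)$ and $(x,\epsilon_2)$ are admissible as well.

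The key observation is that the auxiliary vector defined in~\eqref{eq:nueps} is linear in the scalar radius,
\[
\nu^{(\epsilon_1+\epsilon_2)} \;=\; \nu^{(\epsilon_1)} \;+\; \nu^{(\epsilon_2)}.
\]
Applying Corollary~\ref{cor:addmisible_minmax} to the pair $(x,\epsilon_1+\epsilon_2)$ immediately yields the left-hand side of~\eqref{eq:semigroupmax}, namely $\overline{x}^{(\epsilon_1+\epsilon_2)}_\infty = x + \nu^{(\epsilon_1)} + \nu^{(\epsilon_2)}$. For the right-hand side, a first application of the same corollary to $(x,\epsilon_2)$ gives $y := \overline{x}^{(\epsilon_2)}_\infty = x + \nu^{(\epsilon_2)}$; a second application to $(y,\epsilon_1)$ would then yield $\overline{y}^{(\epsilon_1)}_\infty = y + \nu^{(\epsilon_1)}$, which matches the previous expression. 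The only missing ingredient is the admissibility of the pair $(y,\epsilon_1)$.

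The main step, and the main potential obstacle, is therefore the verification that $(y,\epsilon_1)$ is admissible. Inspecting~\eqref{eq:nueps}, the coordinates of $\nu^{(\epsilon_2)}$ take only the values $+\epsilon_2$, $-\epsilon_2$ or (for odd $d$) $0$, so the interior gaps $y_i - y_{i+1}$ coincide with $x_i - x_{i+1}$ away from the one or two transition indices where the sign of $\nu^{(\epsilon_2)}$ changes; at those transition indices the gap actually \emph{increases} by $2\epsilon_2$ (for even $d$) or by $\epsilon_2$ (for odd $d$, at each of the two transition positions). The boundary quantities $1 - y_1$ and $y_d$ each decrease by exactly $\epsilon_2$. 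A brief case check settles the admissibility: at unchanged interior gaps one has $x_i - x_{i+1} \geq 2(\epsilon_1+\epsilon_2) \geq 2\epsilon_1$; at the transition gaps the bound is only strengthened; and at the boundaries, $2\epsilon_1 + \epsilon_2 \leq 2(\epsilon_1+\epsilon_2) \leq 1 - x_1$, with the analogous estimate at $x_d$. With $(y,\epsilon_1)$ admissible, Corollary~\ref{cor:addmisible_minmax} applied a second time closes the argument; the parity split of~\eqref{eq:nueps} is the only subtlety, but in both cases the modifications to the gaps point in the favourable direction.
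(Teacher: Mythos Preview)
Your approach is the same as the paper's: use the additive formula $\xmax_\infty = x + \nueps$ from Corollary~\ref{cor:addmisible_minmax} together with the linearity $\nu^{(\epsilon_1+\epsilon_2)}=\nu^{(\epsilon_1)}+\nu^{(\epsilon_2)}$ to conclude. The paper's own proof is the one-line chain $\overline{(x+\nu^{(\epsilon_2)})}^{(\epsilon_1)}_\infty = x+\nu^{(\epsilon_2)}+\nu^{(\epsilon_1)} = x+\nu^{(\epsilon_1+\epsilon_2)}$; your version is in fact more careful, since you explicitly verify that the intermediate pair $(x+\nu^{(\epsilon_2)},\epsilon_1)$ is again admissible, a point the paper uses tacitly when invoking~\eqref{eq:maxminapair} the second time.
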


%%%%%%%%%%%%%%%%%%%

\section{Approximate majorization in terms of the \texorpdfstring{$\ell^\infty$-norm}{infinity-norm}}
\label{sec:approx}
Let us move on to the case of probability vectors $x,y\in\Delta_d^{\downarrow}$ such that $x\nsucceq y$. 
On one hand, one can look for a  probability vector $y'$ up to a certain distance of $y$, so that the majorization relation $x \maj y'$ is obtained. 
On the other hand, one can look for a probability vector $x'$ up to a certain distance  of $x$, so that the majorization relation $x' \maj y$ is obtained.
Let us now formally introduce these two notions of approximate majorization in terms of a distance $D$, which is a function $D: \pset \times \pset \mapsto [0,+\infty)$ satisfying the axioms of a metric: 
\begin{itemize}
	\item $D(x,y) \geq 0$ (non-negativity),
	\item $D(x,y) = 0 \iff x=y$ (identity of indiscernibles), 
	\item $D(x,y)=D(y,x)$ (symmetry), and 
	\item $D(x,y) \leq D(x,z)+D(y,z)$ (triangle inequality).
\end{itemize}	
Precisely,
\begin{definition} \label{def:postpremaj}
Let $x,y, \in \Pset$ and $\epsilon \geq 0$. We say that:
\begin{itemize}
  \item $x$ $\epsilon$-post-majorizes $y$, denoted by $x \postmaj{D} y$, whenever there exist $y' \in \Pset$ such that $x \succeq y'$ and $D(y',y) \leq \epsilon$, and
  \item $x$ $\epsilon$-pre-majorizes $y$, denoted by $x \premaj{D} y$, whenever there exist $x' \in \Pset$ such that $x' \succeq x$ and $D(x',x)  \leq \epsilon$.
\end{itemize}
\end{definition}
Clearly, these definitions depend on the choice of the distance and there is no, in principle, any reason to choice one distance over other. 
In the following, we assume the distance $D_p(x,y) = \| x -y \|_p$ induced by the $\ell^p$-norm with $p \geq 1$. 
As it is well-known, all the $\ell^p$-norms are equivalent (\eg \cite{NariciBook}). Indeed, for any $p <q$, one has \cite{Gonzalez2015} 
\begin{equation}
	\|x\|_q \leq \|x\|_p \leq d^{\frac{1}{p} - \frac{1}{q}} \|x\|_q.
\end{equation}
This equivalence means that all $\ell^p$-norms define the same topology and the results of continuity, convergence and many other properties do not depend on which norm is chosen. In particular, we have this freedom of choice in the notions of post and pre-majorization given in Def.~\ref{def:postpremaj}. 
Notwithstanding, we will see that, for a practical purpose, the cases $p=1$ and $p=\infty$ become particularly  useful, since they can be related to the maximum and minimum of the corresponding balls. 

Let us first  observe that the relevant case $p=1$, where the corresponding induced distance $D_1(x,y)= \|x-y\|_1$ has a clear operational interpretation in terms of a degree of distinguishability between probability vectors, has already been considered in the literature~\cite{Torgersen1991,Chubb2018,Horodecki2018}. 
More precisely, let define $\overline{x}^{(\epsilon)}_1 := \max \B^1_\epsilon(x)$ and $\underline{y}^{(\epsilon)}_1:= \min \B^1_\epsilon(y)$. It has been shown that~\cite{Torgersen1991,Chubb2018,Horodecki2018} 
\begin{align}
\label{eq:postnorm1}
x \postmaj{D_1} y  &\iff   x \maj \underline{y}^{(\epsilon)}_1, \mbox{and} \\ 
x \premaj{D_1}  y &\iff \overline{x}^{(\epsilon)}_1 \maj y. \label{eq:prenorm1}
\end{align}
Moreover, post and pre-majorization by using the $\ell^1$-norm are indeed equivalent: $x \postmaj{D_1} y \iff x \premaj{D_1}  y$~\cite{Chubb2018}.

Let us now consider the case of $p=\infty$, where the corresponding induced distance is given by $D_\infty(x,y)= \|x-y\|_\infty$. 
Our first result is that post and pre-majorization in terms of $\ell^\infty$-norm have indeed an analogue characterization to the corresponding ones given in Eqs.~\eqref{eq:postnorm1}--~\eqref{eq:prenorm1}.

%However, we will see that in our case (), we have $x \postmaj y \nLeftrightarrow x \premaj y$. Before proving this, let us provide an alternative characterization of $\epsilon$-post and $\epsilon$-pre-majorization given in Def.~\ref{def:postpremaj} in terms of the of the maximum and minimum of $\B^\infty_\epsilon(x)$.
%
\begin{theorem}
	\label{th:approxmajequiv}
Let $x,y, \in \Pset$ and $\epsilon \geq 0$. Then,
\begin{align}
\label{eq:postnorminf}
x \postmaj{D_\infty} y  &\iff   x \maj \underline{y}^{(\epsilon)}_\infty, \mbox{and} \\ 
x \premaj{D_\infty}  y &\iff \overline{x}^{(\epsilon)}_\infty \maj y. \label{eq:prenorminf}
\end{align}
%
%\begin{itemize}
%  \item $x \postmaj{D_\infty} y \;$ iff $\; x \succeq \underline{y}^{(\epsilon)}$, and
%  \item $x \premaj{D_\infty}  y \;$ iff $\; \xmax \succeq y$.
%\end{itemize}
\end{theorem}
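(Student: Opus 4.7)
The proof plan is to reduce both equivalences to simple applications of the existence and extremality results already established in Theorems~\ref{th:max} and \ref{th:min}, combined with the transitivity of majorization. Essentially, the heavy lifting (showing that $\max \B^\infty_\epsilon(y)$ and $\min \B^\infty_\epsilon(y)$ exist, and giving explicit formulas for them) has been done; what remains is a bookkeeping argument unwinding Definition~\ref{def:postpremaj} with the distance $D_\infty$.

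For the first equivalence~\eqref{eq:postnorminf}, I would argue both implications directly. For the backward direction, assume $x \maj \underline{y}^{(\epsilon)}_\infty$. By Theorem~\ref{th:min}, $\underline{y}^{(\epsilon)}_\infty \in \B^\infty_\epsilon(y)$, so taking $y' := \underline{y}^{(\epsilon)}_\infty$ witnesses $x \postmaj{D_\infty} y$ since $\|y'-y\|_\infty \le \epsilon$ and $x \maj y'$ by hypothesis. For the forward direction, suppose there exists $y' \in \Pset$ with $x \maj y'$ and $\|y'-y\|_\infty \le \epsilon$. Then $y' \in \B^\infty_\epsilon(y)$, and since Theorem~\ref{th:min} identifies $\underline{y}^{(\epsilon)}_\infty = \min \B^\infty_\epsilon(y)$, we have $y' \maj \underline{y}^{(\epsilon)}_\infty$. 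Transitivity of $\maj$ then gives $x \maj \underline{y}^{(\epsilon)}_\infty$.

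The second equivalence~\eqref{eq:prenorminf} follows by a completely symmetric argument using Theorem~\ref{th:max} in place of Theorem~\ref{th:min}. For the backward direction, if $\overline{x}^{(\epsilon)}_\infty \maj y$, then $x' := \overline{x}^{(\epsilon)}_\infty$ lies in $\B^\infty_\epsilon(x)$ and majorizes $y$, witnessing $x \premaj{D_\infty} y$. For the forward direction, given a witness $x' \in \B^\infty_\epsilon(x)$ with $x' \maj y$, maximality of $\overline{x}^{(\epsilon)}_\infty$ in $\B^\infty_\epsilon(x)$ yields $\overline{x}^{(\epsilon)}_\infty \maj x'$, and transitivity closes the argument.

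There is no real obstacle here, since the content is built into the choice of definitions and the previously proved extremality theorems; the only subtlety is ensuring that the witness $y'$ (respectively $x'$) in Definition~\ref{def:postpremaj} may be taken in $\Pset$, which is automatic because $\underline{y}^{(\epsilon)}_\infty$ and $\overline{x}^{(\epsilon)}_\infty$ are elements of $\B^\infty_\epsilon(\cdot) \subseteq \Pset$ by construction in Theorems~\ref{th:max}--\ref{th:min}. It is worth remarking that this proof structure also explains why the $\ell^\infty$ case admits such a clean characterization while generic $\ell^p$ with $1<p<\infty$ does not: by Theorem~\ref{th:nominmax}, the ball $\B^p_\epsilon(x)$ typically lacks a maximum and minimum, so the witness $x'$ (or $y'$) cannot be canonically chosen, and the analogues of~\eqref{eq:postnorminf}--\eqref{eq:prenorminf} are unavailable.
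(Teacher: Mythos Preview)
Your proposal is correct and follows essentially the same approach as the paper's own proof: both argue each direction directly, using that $\underline{y}^{(\epsilon)}_\infty$ (resp.\ $\overline{x}^{(\epsilon)}_\infty$) belongs to the ball for the backward implication, and that it is the minimum (resp.\ maximum) of the ball together with transitivity of $\maj$ for the forward implication. The paper's write-up is slightly terser and handles the second equivalence by saying ``similar arguments'', but the logical content is identical.
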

In this way, post and pre-majorization in terms of $\ell^1$-norm and $\ell^\infty$-norm are both equally useful definitions. 
Moreover, we will see that there is no a general criterion in terms of majorization to choice one distance over the other (see Tables~\ref{tab:postmaj} and~\ref{tab:premaj}). 
Therefore, the choice of the distance depends on the particular problem considered.  

As we recalled above, post and pre-majorization in terms of the $\ell^1$-norm are equivalent. 
We will see that this is not true in general for the case of $\ell^\infty$-norm. 
To show it, we provide the following examples.
\begin{example}
Let $x = \left(\frac{7}{13}, \frac{4}{13}, \frac{2}{13}, 0\right)$ and let $y = \left(\frac{4}{7}, \frac{3}{7}, 0, 0\right)$. Notice that letting $\epsilon = \frac{1}{10}$ we have,
\[%
\textstyle{%
\xmax_\infty = \left(\frac{83}{130}, \frac{40}{130}, \frac{7}{130}, 0\right) \quad \text{and} \quad \underline{y}^{(\epsilon)}_\infty = \left(\frac{33}{70}, \frac{23}{70}, \frac{1}{10}, \frac{1}{10}\right).
}\]
Hence, $x \succeq \underline{y}^{(\epsilon)}_\infty$ (or, equivalently $x \postmaj{D_\infty} y$), but $\xmax_\infty \nsucceq y$ (or, equivalently $x \npremaj{D_\infty} y$).

Now, let $x = \left(\frac{1}{3}, \frac{1}{3}, \frac{1}{3}\right)$ and let $y = \left(\frac{3}{5}, \frac{2}{5}, 0\right)$. 
In this case, for $\epsilon = \frac{3}{10}$ we have,
\[%
\textstyle{%
\xmax_\infty = \left(\frac{19}{30}, \frac{10}{30}, \frac{10}{30}\right) \quad \text{and} \quad \underline{y}^{(\epsilon)}_\infty = \left(\frac{7}{20}, \frac{7}{20}, \frac{3}{10}\right).
}\]
Hence,  $\xmax_\infty\succeq y$ (or, equivalently $x \premaj{D_\infty} y$), but $x\nsucceq \underline{y}^{(\epsilon)}_\infty$ (or, equivalently $x \npostmaj{D_\infty} y$).
\end{example}

Now, we ask for the minimal $\epsilon$ such that $x \postmaj{D_\infty} y$ is possible. 
We find that this quantity is equivalent to the minimal $\ell^\infty$-distance between $y$ and all probability vectors majorized by $x$.
In addition, we find a sharp upper bound to this quantity in terms of the $\ell^\infty$-distance between $x$ and $y$.

\begin{theorem}
	\label{prop:uppboundspostmaj}
Let $x,y\in\Delta_d^{\downarrow}$ be such that $x\nsucceq y$ and define $\epsilon_\infty,\tilde{\epsilon}_\infty$ as
\begin{equation}
  \epsilon_\infty:=\min\{\epsilon\,\colon\,x\postmaj{D_\infty} y\} \ \text{and} \ \tilde{\epsilon}_\infty:=\min\{\|y'-y\|_{\infty}\,\colon\,x\succeq y'\}.
\end{equation}
Then, $\epsilon_\infty = \tilde{\epsilon}_\infty \le \|x-y\|_\infty$, where the bound is sharp.
\end{theorem}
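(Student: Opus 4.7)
The plan is to dispatch the theorem in three short steps, the first establishing the equality of the two infima, the second the upper bound, and the third the sharpness.

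\emph{Step 1 (equality $\epsilon_\infty=\tilde{\epsilon}_\infty$).} Unfolding Definition~\ref{def:postpremaj}, the relation $x\postmaj{D_\infty} y$ holds for a given radius $\epsilon$ iff there exists $y'\in\Pset$ with $x\succeq y'$ and $\|y'-y\|_\infty\le\epsilon$. Consequently, the set of admissible radii is precisely $[\tilde{\epsilon}_\infty,+\infty)$, provided the infimum defining $\tilde{\epsilon}_\infty$ is attained; granting attainment, the two minima coincide. Attainment follows from a standard compactness argument: the set $\{y'\in\Pset\,\colon\,x\succeq y'\}$ is closed (intersection of the closed set $\Pset$ with the finitely many closed half-spaces $s_k(y')\le s_k(x)$) and bounded in $\mathbb{R}^d$, hence compact, while $y'\mapsto\|y'-y\|_\infty$ is continuous and therefore attains its minimum there.

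\emph{Step 2 (upper bound).} To obtain $\tilde{\epsilon}_\infty\le\|x-y\|_\infty$, I test the candidate $y'=x$. By reflexivity of $\succeq$ we have $x\succeq x$, so $y'=x$ is feasible for the minimization defining $\tilde{\epsilon}_\infty$, and this choice produces the value $\|x-y\|_\infty$. Taking the minimum over all feasible $y'$ yields the stated bound.

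\emph{Step 3 (sharpness).} To see that the bound cannot be improved, I exhibit a pair realizing equality. Take $x=e_d=(1/d,\dots,1/d)$, the bottom element of the majorization lattice, and any $y\in\Pset\setminus\{e_d\}$, for instance $y=e_1$. Then $x\nsucceq y$, and since $e_d$ is the bottom of the lattice, the only $y'\in\Pset$ with $e_d\succeq y'$ is $y'=e_d$ itself. Therefore $\tilde{\epsilon}_\infty=\|e_d-y\|_\infty=\|x-y\|_\infty$, which shows the inequality is tight.

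The single idea worth highlighting is that $y'=x$ is a universal feasible witness in the minimization defining $\tilde{\epsilon}_\infty$; everything else is a compactness remark and an explicit example with the uniform distribution. I do not anticipate a substantive obstacle in this argument.
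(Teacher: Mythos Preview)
Your proof is correct, and Step~2 is exactly the paper's argument. The other two steps differ from the paper in instructive ways.

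For the equality $\epsilon_\infty=\tilde{\epsilon}_\infty$, the paper works through the ball-minimum $\underline{y}^{(\epsilon)}_\infty$ developed earlier: it invokes (implicitly via Theorem~\ref{th:approxmajequiv}) that $x\succeq\underline{y}^{(\epsilon_\infty)}_\infty$, places $\|\underline{y}^{(\epsilon_\infty)}_\infty-y\|_\infty$ in the feasible set for $\tilde{\epsilon}_\infty$, and then argues the reverse inequality similarly. Your route is more self-contained: you unfold Definition~\ref{def:postpremaj} to identify the set $\{\epsilon:\,x\postmaj{D_\infty} y\}$ with $[\tilde{\epsilon}_\infty,\infty)$ and use compactness of $\{y'\in\Pset:\,x\succeq y'\}$ to guarantee attainment. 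This avoids any reliance on the structure theorems for $\B^\infty_\epsilon(y)$, at the cost of an explicit (though standard) compactness remark that the paper leaves tacit.

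For sharpness, the paper exhibits the specific pair $x=(1/3,1/3,1/3)$, $y=(1/2,1/4,1/4)$ in $d=3$ and verifies $\underline{y}^{(1/6)}_\infty=x$ directly. Your choice $x=e_d$ is cleaner: since $e_d$ is the bottom of the majorization lattice, the feasible set $\{y':\,e_d\succeq y'\}$ collapses to $\{e_d\}$, forcing $\tilde{\epsilon}_\infty=\|e_d-y\|_\infty$ for every $y\neq e_d$. This gives a one-line argument valid in all dimensions, whereas the paper's example requires a small computation.
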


On the other hand, we obtain similar results for the case of $\epsilon$-pre-majorization.

\begin{theorem}
	\label{prop:uppboundspremaj}
Let $x,y\in\Delta_d^{\downarrow}$ be such that $x\nmaj y$ and define $\epsilon^\infty,\tilde{\epsilon}^\infty$ as
\begin{equation}
  \epsilon^\infty := \min\{\epsilon\,\colon\,x \premaj{D_\infty} y\} \quad \text{and} \quad \tilde{\epsilon}^\infty:=\min\{\|x'-x\|_{\infty}\,\colon\,x'\succeq y\}.
\end{equation}
Then, $\epsilon^\infty=\tilde{\epsilon}^\infty \leq \|x-y\|_\infty$, where the bound is sharp.
\end{theorem}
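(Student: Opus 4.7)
The plan is to mirror, almost verbatim, the argument one would use for the companion Theorem~\ref{prop:uppboundspostmaj} on post-majorization, only with the roles of the two endpoints swapped: we keep $y$ fixed and perturb $x$.

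First I would prove the equality $\epsilon^\infty = \tilde{\epsilon}^\infty$ by two matching inequalities, using the characterization from Theorem~\ref{th:approxmajequiv}, namely $x\premaj{D_\infty} y \iff \overline{x}^{(\epsilon)}_\infty \succeq y$. For the inequality $\epsilon^\infty\le\tilde{\epsilon}^\infty$, let $x^\star \in \Pset$ attain the infimum defining $\tilde{\epsilon}^\infty$ (attainment follows because $\{x'\in\Pset:x'\succeq y\}$ is closed and the $\ell^\infty$-norm is continuous and coercive on $\Pset$). Then $x^\star \succeq y$ and $\|x^\star - x\|_\infty = \tilde{\epsilon}^\infty$, so $x^\star \in \B^\infty_{\tilde{\epsilon}^\infty}(x)$, which by Definition~\ref{def:postpremaj} exhibits a witness that $x\premaj{D_\infty,\tilde{\epsilon}^\infty} y$, and hence $\epsilon^\infty\le\tilde{\epsilon}^\infty$. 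For the reverse inequality, apply Theorem~\ref{th:approxmajequiv} to obtain $\overline{x}^{(\epsilon^\infty)}_\infty \succeq y$; since $\overline{x}^{(\epsilon^\infty)}_\infty \in \B^\infty_{\epsilon^\infty}(x)$ (by Theorem~\ref{th:max} it is the maximum of the ball), it is feasible for the optimization defining $\tilde{\epsilon}^\infty$, giving $\tilde{\epsilon}^\infty \le \|\overline{x}^{(\epsilon^\infty)}_\infty - x\|_\infty \le \epsilon^\infty$.

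Second, for the upper bound $\tilde{\epsilon}^\infty \le \|x-y\|_\infty$, the feasible set contains the trivial choice $x' = y$. Indeed $y\in\Pset$ by hypothesis, and $y \succeq y$ by reflexivity of majorization, so $y$ is admissible in the optimization defining $\tilde{\epsilon}^\infty$ and thus $\tilde{\epsilon}^\infty \le \|y-x\|_\infty = \|x-y\|_\infty$.

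Third, for sharpness I would exhibit an explicit pair saturating the bound. Take $y = e_1 = (1,0,\dots,0)$ and any $x\in\pset$ with $x\neq e_1$ (so that $x\nmaj y$, since $e_1$ is the top of the majorization lattice). The only element of $\pset$ that majorizes $e_1$ is $e_1$ itself, so the minimization in $\tilde{\epsilon}^\infty$ is forced to pick $x'=e_1$, yielding $\tilde{\epsilon}^\infty = \|e_1 - x\|_\infty = \|x - y\|_\infty$. Concretely one may take $x = e_d = (1/d,\dots,1/d)$, so that $\tilde{\epsilon}^\infty = 1-1/d = \|x-y\|_\infty$. I do not expect any serious obstacle: the only delicate point is the attainment of the infimum defining $\tilde{\epsilon}^\infty$, but this follows routinely from compactness of the relevant sublevel sets within the simplex (or, alternatively, can be bypassed by noting that Theorem~\ref{th:max} supplies a concrete maximizer in $\B^\infty_\epsilon(x)$ that realizes the optimization).
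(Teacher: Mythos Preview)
Your proof is correct and follows essentially the same approach as the paper: both establish $\epsilon^\infty=\tilde\epsilon^\infty$ via the two matching inequalities (using the maximizer $\overline{x}^{(\epsilon^\infty)}_\infty$ for one direction and an optimizer $x^\star$ of $\tilde\epsilon^\infty$ for the other), both obtain the bound from the trivial feasible point $x'=y$, and both prove sharpness with $y=e_1$ (the paper takes the concrete $x=(1/2,1/4,1/4)$ in $d=3$, whereas you argue more generally that $e_1$ is majorized only by itself). If anything, your chain $\tilde\epsilon^\infty\le\|\overline{x}^{(\epsilon^\infty)}_\infty-x\|_\infty\le\epsilon^\infty$ is slightly more careful than the paper's shortcut $\epsilon^\infty=\|\overline{x}^{(\epsilon^\infty)}_\infty-x\|_\infty$, and your explicit remark on attainment via compactness is a point the paper leaves implicit.
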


\hfill

In Table~\ref{tab:postmaj} and~\ref{tab:premaj}, we show with several examples that post and pre-majorization in terms of $\ell^\infty$-norm and $\ell^1$-norm are in equal foot in a majorization sense.
All the computations regarding $\ell^1$-norm were made by implementing the algorithms given in \cite{Horodecki2018}. 
We also use the following notation: $\epsilon_1:=\min\{\epsilon\,\colon\,x\postmaj{D_1} y\}$ and $\epsilon^1:=\min\{\epsilon\,\colon\,x\premaj{D_1} y\}$.

\begin{center}
\begin{table} \footnotesize
	\begin{tabular}{@{}lll@{}}
		\toprule[0.5mm]
		Example                                                  &  $\underline{y}^{(\epsilon_\infty)}_\infty$  and $\underline{y}^{(\epsilon_1)}_1$                                                  & Majorization comparison \\ \midrule[0.3mm]
		\begin{tabular}[c]{@{}l@{}}$x=(1/3, 4/15, 1/5, 1/5)$\\  $y=(4/9, 5/18, 5/18, 0)$\end{tabular} & \begin{tabular}[c]{@{}l@{}}$\underline{y}^{(1/5)}_\infty= (4/15, 4/15, 4/15, 1/5)$ \\
			$\underline{y}^{(2/5)}_1=(4/15, 4/15, 4/15, 1/5)$\end{tabular} & $\underline{y}^{(\epsilon_\infty)}_\infty=\underline{y}^{(\epsilon_1)}_1$             \\ \midrule
		%%%%%%%%%%%%%%%%%%%
		\begin{tabular}[c]{@{}l@{}}$x=(4/11, 7/22, 2/11, 3/22)$\\ $y=(1/2, 1/3, 1/6, 0)$\end{tabular}  & \begin{tabular}[c]{@{}l@{}}$\underline{y}^{(1/5)}_\infty= (3/10, 1/4, 1/4, 1/5)$ \\ $\underline{y}^{(10/33)}_1=(23/66, 1/3, 1/6, 5/33)$\end{tabular} & $\underline{y}^{(\epsilon_\infty)}_\infty\not\succeq\underline{y}^{(\epsilon_1)}_1$ and
		$\underline{y}^{(\epsilon_1)}_1\succeq\underline{y}^{(\epsilon_\infty)}_\infty$                 \\ \midrule
		%%%%%%%%%%%%%%%%%%%
		\begin{tabular}[c]{@{}l@{}}$x=(6/13, 3/13, 2/13, 2/13)$\\$y=(4/11, 4/11, 3/11, 0)$\end{tabular} & \begin{tabular}[c]{@{}l@{}}$\underline{y}^{(1/10)}_\infty= (3/10, 3/10, 3/10, 1/10)$ \\ $\underline{y}^{(72/143)}_1=(1/4, 1/4, 1/4, 1/4)$\end{tabular} & $\underline{y}^{(\epsilon_\infty)}_\infty\succeq\underline{y}^{(\epsilon_1)}_1$ and
		$\underline{y}^{(\epsilon_1)}_1\not\succeq\underline{y}^{(\epsilon_\infty)}_\infty$         \\ \midrule
		%%%%%%%%%%%%%%%%%%%
		\begin{tabular}[c]{@{}l@{}}	$x=(4/11, 7/22, 2/11, 3/22)$\\$y=(1/2, 1/3, 1/6, 0)$\end{tabular} & \begin{tabular}[c]{@{}l@{}}	$\underline{y}^{(1/5)}_\infty= (3/10, 1/4, 1/4, 1/5)$ \\$\underline{y}^{(10/33)}_1=(23/66, 1/3, 1/6, 5/33)$\end{tabular} & $\underline{y}^{(\epsilon_\infty)}_\infty\not\succeq\underline{y}^{(\epsilon_1)}_1$ and
		$\underline{y}^{(\epsilon_1)}_1\not\succeq\underline{y}^{(\epsilon_\infty)}_\infty$
		\\ \bottomrule[0.5mm]
	\end{tabular}
\caption{Comparison between post-majorization in terms of $\ell^\infty$-norm and $\ell^1$-norm.}
\label{tab:postmaj}
\end{table}
\end{center}
%%%%%%%

\begin{center}
	\begin{table}  \footnotesize
		\begin{tabular}{@{}lll@{}}
			\toprule[0.5mm]
			Example                                                  &  $\overline{x}^{(\epsilon^\infty)}_\infty$ and $\overline{x}^{(\epsilon^1)}_1$   & Majorization comparison \\ \midrule[0.3mm]
			%%%%%%%%%%%%%%%%%%%
			\begin{tabular}[c]{@{}l@{}}$x=(4/13, 4/13, 3/13, 2/13)$\\$y=(1, 0, 0, 0)$\end{tabular} & \begin{tabular}[c]{@{}l@{}}$\overline{x}^{(7/10)}_\infty= (1, 0, 0, 0)$\\
				$\overline{x}^{(18/13)}_1=(1, 0, 0, 0)$\end{tabular} & $\overline{x}^{(\epsilon^\infty)}_\infty = \overline{x}^{(\epsilon^1)}_1$             \\ \midrule
			%%%%%%%%%%%%%%%%%%%
			\begin{tabular}[c]{@{}l@{}}$x=(9/29, 8/29, 7/29, 5/29)$\\$y=(1/2, 3/7, 1/14, 0)$\end{tabular}  & \begin{tabular}[c]{@{}l@{}}$\overline{x}^{(1/5)}_\infty= (74/145, 13/29, 6/145, 0)$\\
			$\overline{x}^{(139/203)}_1=(202/203, 1/203, 0, 0)$ \end{tabular} & $\overline{x}^{(\epsilon^\infty)}_\infty\not\succeq\overline{x}^{(\epsilon^1)}_1$ and
			$\overline{x}^{(\epsilon^1)}_1\succeq\overline{x}^{(\epsilon^\infty)}_\infty$                 \\ \midrule
			%%%%%%%%%%%%%%%%%%%
			\begin{tabular}[c]{@{}l@{}}	$x=(1/3, 7/24, 7/24, 1/12)$\\$y=(4/11, 3/11, 3/11, 1/11)$\end{tabular} & \begin{tabular}[c]{@{}l@{}}$\overline{x}^{(1/10)}_\infty= (13/30, 3/8, 23/120, 0)$\\
			$\overline{x}^{(5/66)}_1=(9/22, 7/24, 7/24, 1/132)$\end{tabular} & 	$\overline{x}^{(\epsilon^\infty)}_\infty\succeq\overline{x}^{(\epsilon^1)}_1$ and
			$\overline{x}^{(\epsilon^1)}_1\not\succeq\overline{x}^{(\epsilon^\infty)}_\infty$         \\ \midrule
			%%%%%%%%%%%%%%%%%%%
			\begin{tabular}[c]{@{}l@{}}	$x=(8/25, 7/25, 7/25, 3/25)$\\$y=(7/18, 5/18, 2/9, 1/9)$\end{tabular} & \begin{tabular}[c]{@{}l@{}}	$\overline{x}^{(1/10)}_\infty= (21/50, 19/50, 9/50, 1/50)$\\$\overline{x}^{(31/225)}_1=(103/225, 7/25, 59/225, 0)$\end{tabular} & 	$\overline{x}^{(\epsilon^\infty)}_\infty\not\succeq\overline{x}^{(\epsilon^1)}_1$ and
			$\overline{x}^{(\epsilon^1)}_1\not\succeq\overline{x}^{(\epsilon^\infty)}_\infty$
			\\ \bottomrule[0.5mm]
		\end{tabular}
		\caption{Comparison between pre-majorization in terms of $\ell^\infty$-norm and $\ell^1$-norm.}
		\label{tab:premaj}
	\end{table}
\end{center}

\normalsize

%%% applications QI
\section{Approximate state transformations between nonuniform states}
\label{sec:nonuniform}
There are many resource theories where the deterministic and exact transformations between resources are governed by a majorization law. That is the case for the so-called quantum nonuniformity resource theory, which claims that any nonuniform (that is, non maximally mixed) state is a resource, and studies the possible interconversion between resources by means of a prescribed class of (free) operations~\cite{Gour2015}. This problem is closely related to the thermodynamic scenario where the free states and operations are defined as the ones that are thermal relative to some fixed temperature~\cite{Janzing2000}.

Our previous results can be applied to the resource theory of quantum nonuniformity. For simplicity, we follow Ref.~\cite{Streltsov2018}, where only quantum operations that preserve the dimension of the Hilbert space are considered (a more general case is discussed in~\cite{Gour2015}).

Regarding the free operations, which must preserve the set of free states (in our case, the maximally mixed state $I/d$), there are different approaches in the literature. Namely,
\begin{itemize}
	\item Mixture of unitaries (MU): $\E_{\mathrm{MU}}(\rho) = \sum p_i U \rho U_i^\dag$, with $p \in \Delta_{d'}$ (for some $d'$) and $\{U_i\}$ unitary operations;
	\item Noisy operations (NO): $\E_{\mathrm{NO}}(\rho) = \tr_E\left( U (\rho \otimes I/d)U^\dag\right)$, with $U$ a unitary operation;
	\item Unital operations (U): $\E_{\mathrm{U}}(I/d) = I/d$.	
\end{itemize}
Notice that the induced sets of free operations satisfy the strict inclusion relations $\{\E_{\mathrm{MU}} \} \subset \{\E_{\mathrm{NO}} \}\subset \{\E_{\mathrm{U}}\}$ (see Lemma 5 of~\cite{Gour2015}). Regardless of which class of free operations is considered, the set of free states is formed only by the uniform state $\rho = I/d$ (i.e. the maximally mixed one). Hence, the resources are called nonuniform states.

Moreover, as regards the transformations between nonuniform states, the three classes of quantum free operations are equivalent, as the following Lemma asserts.
\begin{lemma}[Lemma 10 of~\cite{Gour2015}]
	Let $\rho$ and $\sigma$ be two quantum density matrices acting on a $d$-dimensional Hilbert space.
	Then, 
	\begin{equation}
    \rho \underset{MU}{\mapsto} \sigma \iff \rho \underset{NO}{\mapsto} \sigma \iff \rho \underset{U}{\mapsto} \sigma,
	\end{equation}
	%the following statements are equivalent:
	%\begin{itemize}
	%	\item $\rho \underset{X}{\mapsto} \sigma$ by a mixture of unitaries,
	%	\item  $\rho \underset{X}{\mapsto} \sigma$ by a noisy operation,
	%	\item $\rho \underset{X}{\mapsto} \sigma$ by a unital operation,
	%\end{itemize}
	%
where $\rho \underset{X}{\mapsto} \sigma$ with $X \in \{MU,NO,U\}$ means that there exists a free operation $\E_X$ such that $\sigma=\E_X(\rho)$.
\end{lemma}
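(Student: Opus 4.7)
The plan is to establish the three-way equivalence by combining the strict inclusions of the free-operation classes with a majorization-based characterization of state convertibility. The forward implications $\rho \underset{MU}{\mapsto} \sigma \Rightarrow \rho \underset{NO}{\mapsto} \sigma \Rightarrow \rho \underset{U}{\mapsto} \sigma$ are immediate from the inclusions $\{\E_{\mathrm{MU}}\} \subset \{\E_{\mathrm{NO}}\} \subset \{\E_{\mathrm{U}}\}$ already recalled in the text, so the only nontrivial task is to close the cycle by proving $\rho \underset{U}{\mapsto} \sigma \Rightarrow \rho \underset{MU}{\mapsto} \sigma$.

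For the nontrivial direction, I would exploit the spectral (Schur) viewpoint and reduce everything to classical majorization. Let $\lambda(\rho)$ and $\lambda(\sigma)$ denote the vectors of eigenvalues of $\rho$ and $\sigma$ sorted in non-increasing order. The two-step argument is as follows. First, show that whenever $\sigma = \E_{\mathrm U}(\rho)$ for some unital CPTP map, one has $\lambda(\rho) \succeq \lambda(\sigma)$; this is the quantum Schur--Horn/Uhlmann theorem for unital channels. Second, show the converse direction of Uhlmann's theorem: if $\lambda(\rho) \succeq \lambda(\sigma)$, then there exist $\{p_i\}$ and unitaries $\{U_i\}$ with $\sigma = \sum_i p_i U_i \rho U_i^\dagger$, i.e. $\rho \underset{MU}{\mapsto} \sigma$. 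Composing the two implications gives $\rho \underset{U}{\mapsto} \sigma \Rightarrow \rho \underset{MU}{\mapsto} \sigma$.

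For the second step, I would use the spectral decompositions $\rho = \sum_j r_j \ket{r_j}\bra{r_j}$ and $\sigma = \sum_j s_j \ket{s_j}\bra{s_j}$ and invoke the classical characterization~\eqref{eq:majdouble}: from $\lambda(\rho) \succeq \lambda(\sigma)$ we get a doubly stochastic $B$ with $\lambda(\sigma) = B\lambda(\rho)$, and Birkhoff's theorem then decomposes $B = \sum_k p_k \Pi_k$ into permutations. Each permutation $\Pi_k$, conjugated with the basis change $V$ from $\{\ket{r_j}\}$ to $\{\ket{s_j}\}$, lifts to a unitary $U_k = V\Pi_k$ acting on $\mathcal{H}$, yielding $\sigma = \sum_k p_k U_k \rho U_k^\dagger$ after a short verification that the off-diagonal terms vanish thanks to the eigenbasis alignment.

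The main obstacle is the first step, namely proving the majorization inequality $\lambda(\rho) \succeq \lambda(\E_{\mathrm U}(\rho))$ for arbitrary unital $\E_{\mathrm U}$. The cleanest route is via the variational (Ky Fan) characterization of partial sums of eigenvalues, $s_k(\lambda(\tau)) = \max_{P:\,\tr P = k} \tr(P\tau)$ over rank-$k$ projectors $P$, combined with the fact that the dual map $\E_{\mathrm U}^\ast$ is also unital and positive, so that $\tr(P\,\E_{\mathrm U}(\rho)) = \tr(\E_{\mathrm U}^\ast(P)\,\rho)$ with $0 \le \E_{\mathrm U}^\ast(P) \le I$ and $\tr \E_{\mathrm U}^\ast(P) = k$. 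A standard argument then bounds this by $s_k(\lambda(\rho))$, establishing the required majorization for all $k$ and completing the loop.
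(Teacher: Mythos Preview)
The paper does not supply its own proof of this lemma: it is quoted verbatim as ``Lemma 10 of~\cite{Gour2015}'' and used as an input, with the subsequent Lemma~\ref{lemma:nonunifmaj} (Uhlmann's theorem) playing the operative role in the authors' own argument for Theorem~\ref{th:equivapproxtransf}. So there is nothing in the paper to compare your proposal against.

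That said, your argument is the standard and correct one, and in fact it is precisely the route taken in~\cite{Gour2015}: use the class inclusions for the easy directions, then close the loop by showing that any unital map forces $\lambda(\rho)\succeq\lambda(\sigma)$ (your Ky Fan / dual-map argument is fine, noting that the adjoint of a unital CPTP map is again unital and positive), and finally invoke Birkhoff plus a basis-change unitary to realise $\sigma$ as a mixture of unitary conjugates of $\rho$. One small presentational point: in your construction each $U_k\rho U_k^{\dagger}$ is already diagonal in the eigenbasis of $\sigma$, so there are no off-diagonal terms to cancel; the phrase ``off-diagonal terms vanish thanks to the eigenbasis alignment'' suggests a cancellation that never needs to happen. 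Otherwise the plan is sound and essentially reproduces the proof the paper is citing.
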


In this way, one can fix any of these three classes of free operations without loss of generality, and let $\rho \mapsto \sigma$ denote the transformation of $\rho$ into $\sigma$ by means of a free operation (no matter of which class). In addition, due to Uhlmann's~\cite{Uhlmann1970} theorem, the problem of exact transformations between nonuniform states reduces to a majorization relation between their corresponding spectra (\eg Lemma 3 of~\cite{Streltsov2018}). Precisely,
\begin{lemma}
	\label{lemma:nonunifmaj}
Let $\rho$ and $\sigma$ be two quantum density matrices acting on a $d$-dimensional Hilbert space.
	Then, 
\begin{equation}
\rho \mapsto \sigma \iff x(\rho) \maj x(\sigma),
\end{equation}
where $x(\rho),\,x(\sigma) \in \Pset$, with their components given by the eigenvalues of $\rho$ and $\sigma$, respectively.
\end{lemma}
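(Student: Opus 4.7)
The plan is to reduce to the mixture-of-unitaries (MU) case by invoking the equivalence of the three free-operation classes established in the lemma immediately above, and then translate the problem into the doubly-stochastic characterization of majorization given in equation~\eqref{eq:majdouble}. Once framed this way, both implications reduce essentially to Uhlmann's theorem, which I would spell out constructively.

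For the forward direction, I would assume $\sigma = \sum_k p_k U_k \rho U_k^\dagger$, diagonalize $\rho = \sum_i x_i(\rho)\ket{e_i}\bra{e_i}$ and $\sigma = \sum_j x_j(\sigma)\ket{f_j}\bra{f_j}$ with eigenvalues listed in non-increasing order, and exhibit the matrix $B_{ji}:=\sum_k p_k |\bra{f_j}U_k\ket{e_i}|^2$. Unitarity of each $U_k$ together with $\sum_k p_k=1$ makes $B$ doubly stochastic, and evaluating $\bra{f_j}\sigma\ket{f_j}$ yields $x(\sigma)=B\,x(\rho)$. Equation~\eqref{eq:majdouble} then gives $x(\rho)\maj x(\sigma)$.

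For the converse, I would start from the decomposition $x(\sigma)=\sum_k q_k \Pi_k\, x(\rho)$ provided by equation~\eqref{eq:majdouble} combined with Birkhoff's theorem. Let $V$ be any unitary sending $\ket{e_i}\mapsto \ket{f_i}$ index by index, and let $P_k$ be the unitary permuting $\{\ket{e_i}\}$ according to $\Pi_k$. Define the MU channel $\E(\rho):=\sum_k q_k (V P_k)\rho(V P_k)^\dagger$. The inner average $\sum_k q_k P_k\rho P_k^\dagger$ is diagonal in $\{\ket{e_j}\}$ with eigenvalues $\sum_k q_k \Pi_k\, x(\rho)=x(\sigma)$, and conjugation by $V$ transports this to the $\sigma$-eigenbasis, yielding $\sigma$ itself. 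Hence $\rho\underset{MU}{\mapsto}\sigma$, which by the preceding equivalence lemma is the same as $\rho\mapsto\sigma$.

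The only delicate step is the bookkeeping in the converse: one must carefully verify that each $P_k\rho P_k^\dagger$ remains diagonal in $\{\ket{e_j}\}$ with its spectrum permuted according to $\Pi_k$, so that the weighted sum matches $\sum_k q_k \Pi_k\, x(\rho)$ component-wise, and that the non-increasing orderings of $x(\rho)$ and $x(\sigma)$ interact correctly with the permutations. Everything else is a routine check of double stochasticity, unitarity, and linearity of conjugation.
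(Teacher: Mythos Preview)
Your argument is correct. Note, however, that the paper does not actually supply a proof of this lemma: it is stated as a known consequence of Uhlmann's theorem, with an external citation (Lemma~3 of~\cite{Streltsov2018}), and no further details are given. What you have written is essentially a self-contained proof of Uhlmann's theorem in this setting: the forward direction builds the doubly stochastic matrix from the overlaps $|\bra{f_j}U_k\ket{e_i}|^2$, and the converse uses Birkhoff's decomposition to exhibit an explicit mixture-of-unitaries channel. This is the standard route and matches the spirit of the paper's one-line justification, just fully unpacked. Your only caveat about the bookkeeping in the converse (matching the permutation conventions so that $P_k\rho P_k^\dagger$ has the right diagonal) is well placed but routine once a convention for $P_k$ is fixed.
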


Clearly, there exist nonuniform states $\rho$ and $\sigma$ such that $\rho \mapsto \sigma$ is not allowed. In this case, free operations are not enough to give a solution for the exact transformation. An alternative route is to relax the exact condition. For example, one can look for a final state $\sigma'$ up to a certain distance of $\sigma$, so that the transformation $\rho \mapsto \sigma$ is allowed. Or one can look for an initial state $\rho'$ up to a certain distance of $\rho$, so that $\rho' \mapsto \sigma$ is allowed. Let us define some notation for those cases.
\begin{definition} \label{def:approxtransf}
	Let $\rho$ and $\sigma$ be two quantum states acting on  a $d$-dimensional Hilbert space. We write:
	\begin{itemize}
		\item $\rho \postapp{D} \sigma$ whenever there exists a quantum state $\sigma'$ such that $\rho \mapsto \sigma'$ and $D(\sigma,\sigma') \leq \epsilon$,
		\item $\rho \preapp{D} \sigma$ whenever there exists a quantum state $\rho'$ such that $\rho' \mapsto \sigma$ and $D(\rho,\rho') \leq \epsilon$,
	\end{itemize}
	where $D$ is a distance (a metric) on the set of quantum states.
\end{definition}
These can be seen as notions of quantum post and pre-majorization. As it is noticed in Ref.~\cite{Gour2015}, two natural questions arise: 
(a) which distance measure on the set of quantum states should be used?; and (b) can the problem be reduced to a majorization relation between probability vectors associated to the quantum states, as in the case of exact transformations (see Lemma~\ref{lemma:nonunifmaj})?
For the latter, %by adapting the proof of Lemma 55 of~\cite{Gour2015}, 
the answer is positive as long as the distance measure used is contractive under unital operations. This contractivity means that, for any pair of quantum states $\rho$ and $\sigma$, and for every unital operation $\E$, then $D(\rho,\sigma) \ge D(\E(\rho),\E(\sigma))$. It turns out that this condition is satisfied by any distance induced by the Schatten $p$-norm of quantum states, $D_p(\rho,\sigma):=\|\rho -\sigma\|_p = (\tr|\rho -\sigma|^p)^{1/p}$ with $p \geq 1$ (see \cite{Perez2006}).
\begin{theorem} \label{th:equivapproxtransf}
Let $\rho$ and $\sigma$ be two quantum states acting on  a $d$-dimensional Hilbert space and $x(\rho), x(\sigma) \in \Pset$ the corresponding probability vectors formed by the eigenvalues of $\rho$ and $\sigma$ sorted in non-increasing order, respectively. 
Then, statement \ref{sti} is equivalent to \ref{stii} 
and statement \ref{stiii} is equivalent to \ref{stiv},

\begin{enumerate}[(i)]
\item\label{sti} $\rho \postapp{D_p} \sigma$
\item\label{stii} $x(\rho)  \postmaj{D_p} x(\sigma)$
\item\label{stiii} $\rho \preapp{D_p} \sigma$
\item\label{stiv} $x(\rho) \premaj{D_p} x(\sigma)$
\end{enumerate}
\end{theorem}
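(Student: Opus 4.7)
The plan is to prove (i)$\Leftrightarrow$(ii) in detail; the equivalence (iii)$\Leftrightarrow$(iv) then follows by an entirely parallel argument. The bridge between the quantum and the probabilistic settings rests on two ingredients already available to us: Lemma~\ref{lemma:nonunifmaj}, which reduces the exact conversion $\rho\mapsto\sigma'$ to the eigenvalue majorization $x(\rho)\succeq x(\sigma')$; and Mirsky's inequality for Schatten norms, which guarantees that for any two Hermitian operators $A$ and $B$, $\|\lambda^\downarrow(A)-\lambda^\downarrow(B)\|_p \leq \|A-B\|_p$, where $\lambda^\downarrow$ denotes the non-increasingly ordered spectrum.

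For the direction (i)$\Rightarrow$(ii), I would take a witness $\sigma'$ of $\rho\postapp{D_p}\sigma$, that is, a state with $\rho\mapsto\sigma'$ and $\|\sigma-\sigma'\|_p\leq\epsilon$. Lemma~\ref{lemma:nonunifmaj} immediately gives $x(\rho)\succeq x(\sigma')$, while Mirsky's inequality gives $\|x(\sigma)-x(\sigma')\|_p\leq\|\sigma-\sigma'\|_p\leq\epsilon$. Thus $y':=x(\sigma')\in\Pset$ is the desired probability-vector witness of $x(\rho)\postmaj{D_p}x(\sigma)$.

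For the converse (ii)$\Rightarrow$(i), the key idea is to \emph{lift} a probability-vector witness back to a quantum witness using the eigenbasis of $\sigma$. Given $y'\in\Pset$ with $x(\rho)\succeq y'$ and $\|y'-x(\sigma)\|_p\leq\epsilon$, fix a spectral decomposition $\sigma = V\,\mathrm{diag}(x(\sigma))\,V^\dag$ with eigenvalues in non-increasing order, and set $\sigma':= V\,\mathrm{diag}(y')\,V^\dag$. Because $\sigma$ and $\sigma'$ are simultaneously diagonalized by $V$, the Schatten norm collapses to the $\ell^p$-norm on the diagonal differences, so $\|\sigma-\sigma'\|_p = \|x(\sigma)-y'\|_p \leq \epsilon$; and since $y'$ is already non-increasingly ordered, $x(\sigma')=y'$, so $x(\rho)\succeq x(\sigma')$ and Lemma~\ref{lemma:nonunifmaj} yields $\rho\mapsto\sigma'$. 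The statement (iii)$\Leftrightarrow$(iv) is obtained by the symmetric procedure, replacing $\sigma$ by $\rho$ in the lifting step and exchanging the roles of $x(\rho)$ and $x(\sigma)$ in both Lemma~\ref{lemma:nonunifmaj} and Mirsky's inequality.

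The main technical obstacle is invoking Mirsky's inequality in the appropriate form: without it, the Schatten distance between $\sigma$ and $\sigma'$ need not dominate the $\ell^p$-distance between their sorted spectra when the two states fail to commute, and the forward direction would break. The converse is comparatively easy precisely because we can always choose a witness that commutes with the reference state, which shows that restricting attention to the ordered spectra loses no generality in the approximate setting.
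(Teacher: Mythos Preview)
Your proof is correct and follows essentially the same route as the paper's: the lifting of a probability-vector witness to a commuting quantum witness in the eigenbasis of the reference state for the backward direction, and the eigenvalue perturbation inequality together with Lemma~\ref{lemma:nonunifmaj} for the forward direction. The only cosmetic difference is that the paper cites the key inequality $\|x(\rho)-x(\sigma)\|_p\leq\|\rho-\sigma\|_p$ as Lidskii's theorem (via symmetric gauge functions, \cite[Th.~III.4.4]{Bhatia1997}) rather than as Mirsky's inequality, but the content is the same.
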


In other words, Theorem~\ref{th:equivapproxtransf} says that, in the resource theory of nonuniformity, the approximate conversion of quantum states by means of free operations can be studied in terms of classical probability distributions (we notice that the equivalence between \ref{sti} and \ref{stii} is similar to Lemma 55 of~\cite{Gour2015}, but here we relax the contractivity condition of the metric to free operations). 
Hence, all our previous results concerning approximate majorization can be applied.

%%% Concluding Remarks
\section{Concluding remarks}
\label{sec:conclusions}

In this work, we study the majorization relation in connection with the geometry given by the $\ell^p$-norm. 
More precisely, we prove that $\ell^p$-balls with $1<p <\infty$, in general, do not admit extremal probaility vectors (Theorem~\ref{th:nominmax}), unlike the case $p=1$ previously discussed in the literature.
Here, we complete this study by considering the case $p=\infty$.
In particular, we show that the balls $\B_\epsilon^\infty$ are complete sublattices of the majorization lattice (Theorem~\ref{th:sublattice})). 
This is an interesting order-theoretic result in itself that has as a consequence that each $\ell^\infty$-ball has supremum and infimum that belong to the ball, that is, they are the maximum (Theorem~\ref{th:max}) and minimum (Theorem~\ref{th:min}), respectively. 
Our findings show that the existence of a maximum and a minimum, is a rather peculiar feature, which is specific of the $\ell^\infty$ and $\ell^1$ norms. 

We stress that this property becomes of particular interest in those areas of research for which the notion of \emph{approximate majorization} becomes essential, due to the impossibility of working without making approximations. 
In turn, our results suggest that the case $\ell^\infty$-norm (Theorem~\ref{th:approxmajequiv}) together with $\ell^1$-norm are the most relevant ones for practical applications where approximations are unavoidable. 
And in principle, as it follows from the examples given in Tables~\ref{tab:postmaj} and~\ref{tab:premaj},
there is no criterion in terms of majorization to choose one norm over the other.

Finally, we apply our results in the framework of the quantum resource theory of nonuniformity.
In particular, we obtain that post-majorzation and  pre-majorzation between quantum states is equivalent to its classical counterpart between the probability vectors formed by the corresponding eigenvalues of the quantum sates (equivalences between (i) and (ii),  and (iii) and (iv) of Theorem \ref{th:equivapproxtransf}, respectively). 
In this way, all our previous results concerning approximate majorization can be applied to this resource theory.

\section*{Acknowledgements}

All the authors acknowledge financial support from Consejo Nacional de Investigaciones Cient\'ificas y T\'ecnicas (AR) and Agencia Nacional de Promoci\'on Cient\'ifica y Tecnol\'ogica (AR). 
G.M.B is partially supported by the Fondazione di Sardegna within the project ``Strategies and Technologies for Scientific Education and Dissemination''.

%%% Appendix
\appendix

\section{A brief introduction to order theory}
\label{app:lattice}

Here, we provide all order-theoretic definitions that we have used in the main text to characterize the majorization lattice and the ball $\B^\infty_{\epsilon}(x)$ (\eg \cite{DaveyBook}, for a complete introduction to order and lattice theories).

One of the primitive concepts in order theory is the one of \emph{partial order}, which formalizes and generalizes the intuitive notion of ordering of the elements of a set. 
\begin{definition}[POSET]
A (non-strict) partial order is a binary relation, denoted as $\maj$, over a set $\mathcal{P}$ satisfying the following axioms: 
for every $x,y,z \in \mathcal{P}$ one has
\begin{enumerate}
	\item reflexivity: $x \succeq x$,
	\item antisymmetry: $x \succeq y$ and $y \succeq x$, then $x=y$, and
	\item transitivity: $x \succeq y$ and $y \succeq z$, then $x \succeq  z$.
\end{enumerate}
\end{definition}
Notice that if the antisymmetry condition is not satisfied, the binary relation $\maj$ is called a \textit{preorder}.

A set equipped with a partial order, $\langle \P, \maj \rangle$, is called a \emph{partially ordered set} (POSET).
A \textit{totally ordered} set is a particular POSET where all elements are comparable, that is: either $x \maj y$ or $y \maj x$ for any pair $x,y \in \P$. 
Clearly, this is not the general case, because there exist POSETs for which there are elements that are incomparable, that is, $x \nmaj y$ and $y \nmaj x$.

The \textit{top} (or maximal) and \textit{bottom} (or minimal) elements of a POSET, if any, are defined as follows.
\begin{definition}[top and bottom]
	A POSET $\langle \P, \maj \rangle$ has  
	\begin{enumerate}
		\item a top element $\top$ iff $\top \in \P$ and $\top \maj x$ for all $x \in \S$ 
		\item a bottom element $\bot$ iff $\bot \in \P$ and $x \maj \bot$ for all $x \in \S$
	\end{enumerate}
\end{definition}
A \textit{bounded} POSET is a POSET that has top and bottom elements. 
Other important notions in order theory are the \emph{supremum} (or least upper bound) and the \emph{infimum} (or greatest lower bound) of a given subset.
\begin{definition}[supremum and infimum]
Let $\langle \P, \maj \rangle$ be a POSET and let $\S \subset \P$. 
On the one hand, the supremum of $\S$ (if exists), denoted as $\bigvee \S$, is defined as an element of $\P$ satisfying the two following conditions:
\begin{enumerate}
	\item $\bigvee \S$ is an upper bound: $\bigvee \S \maj x$ for all $x \in \S$ 
	\item $\bigvee \S$ is the least of the upper bounds: for each $y \in \P$ such that $y \maj x$ for all $x \in \S$, one has $\bigvee \S \maj y$.
\end{enumerate}
On the other hand, the infimum of $\S$ (if exists), denoted as $\bigwedge \S$, is defined as an element of $\P$ satisfying the two conditions:
\begin{enumerate}
	\item $\bigwedge \S$ is a lower bound: $x \maj \bigwedge \S$ for all $x \in \S$ 
	\item $\bigwedge \S$ is the greatest of the lower bounds: for each $y \in \P$ such that $x \maj y$ for all $x \in \S$, one has $y  \maj \bigwedge \S$.
\end{enumerate}
\end{definition}
Notice that if the supremum exists, then it is unique (and the same happens for the infimum). In addition, if $\bigvee \S \in \S$, then it is called the maximum of $\S$ and denoted as $\max\S$. In similar way,  if $\bigwedge \S \in \S$, then it is called the minimum of $\S$ and denoted as $\min\S$.

A \emph{lattice} is a partially ordered set in which the supremum and infimum of any two elements exist.
\begin{definition}[lattice]
A lattice $\langle \P, \maj, \vee, \wedge \rangle$ is a POSET $\langle \P, \maj\rangle$ such that for any pair $x,y \in \P$ there exist the supremum, denoted as $x \vee y$, and the infimum, denoted as $x \wedge y$. 
\end{definition}
This is the order-theoretic definition of a lattice.
An alternative but equivalent definition is the algebraic one, where a lattice is defined as a set equipped with two binary operators,
$\vee$ and $\wedge$, which satisfy the idempotent, commutative and associative laws, as well as the absorption law (\eg  \cite{DaveyBook}).
Accordingly, if one considers a subset of $\P$ formed by finite elements, $\S = \{ x^1,\ldots, x^N \}$ with $\ x^i \in \P$, by appealing to the algebraic properties of the definition of lattice, it is straightforward to show that the infimum and the supremum of $\P$ always exist, and are given by $\bigwedge \mathcal{P} = x^1 \wedge x^2 \wedge \ldots \wedge x^N$ and $\bigvee \mathcal{P} = x^1 \vee x^2 \vee \ldots \vee x^N$, respectively. However, if one considers an arbitrary subset of $\P$ (which could be infinite), the lattice properties alone are not strong enough to guarantee the existence of an infimum and a supremum. If infimum and supremum exist for every family, the lattice is said to be \emph{complete}. 
\begin{definition}[complete lattice]
	A complete lattice $\langle \P, \maj, \vee, \wedge \rangle$ is a POSET $\langle \P, \maj\rangle$ such that for any $\S \in \P$ there exist the supremum, denoted as $\bigvee \S$, and the infimum, denoted as $\bigwedge \S$. 
\end{definition}
Finally, we introduce the notion of a \emph{sublattice}, which is a subset with the same supremum and infimum of the original lattice.	
\begin{definition}[sublattice]
	Let $\langle \P, \maj, \vee, \wedge \rangle$ be a lattice. 
	$\S \subseteq \P$ is a subalttice  of $\P$ iff $x \vee y \in \S$ and $x \wedge y \in \S$ for any pair $x,y \in \S$.
\end{definition}

%%%%%%%%%%%%%%%%%%%%%%%%%%%%%%%%%%%%%%%%%%%%%%%%%%%%%%%%%%%%%%%%%%%%%%%%%%
\section{Algorithm to calculate the upper envelope} 
\label{app:upperenv}

Here, we recall the algorithm given in Ref.~\cite{Bosyk2019} to obtain the  upper envelope of the polygonal joining $\{(k,s_k(x))\}_{k=0}^d$ for a given probability vector $x\in\mathbb{R}^d$.

\begin{algorithm}[H]
    \caption{Upper envelope}
    \label{alg:upperenv}
    \begin{algorithmic}
        \State \textbf{input:} $x\in\mathbb{R}^d$
        \State \textbf{output:} coordinates of the upper envelope of the polygonal joining $\{(k,s_k(x))\}_{k=0}^d$
        \smallskip
        \Procedure{UpperEnv}{$x$}
        \State $\K \leftarrow \{0\}$ \Comment{stores the `critical points' of $x$}
        \State $i \leftarrow 0$
            \While{$i < \mathrm{length}(x)$}
                \State $m \leftarrow \{0\}$ \Comment{stores slope values}
                \For{$j = i\!+\!1\,\ldots\,\mathrm{length}(x)$}
                    \State $m \leftarrow \mathrm{append}\left\{m,\frac{s_j(x)-s_i(x)}{j-i}\right\}$
                \EndFor
                \State $k \leftarrow$ $\max$(position of $\max(m)$) \Comment{finds position of the last maximum slope}
                \State $\K \leftarrow \mathrm{append}\{\K,k\}$
                \State $i \leftarrow k$ \Comment{updates $i$}
            \EndWhile
            \State \textbf{return} $\{(k,s_{k}(x))\}_{k \in \K}$ \Comment{coordinates of the upper envelope}
        \EndProcedure
    \end{algorithmic}
\end{algorithm}

\section{Proofs of Sec.~\ref{sec:bola}}
\label{app:appendixproofsbola}

\begin{proof}[{Proof of Theorem} \ref{th:nominmax}:]
	Let $x\in\mathbb{R}^d$, $\epsilon>0$, $1<p<\infty$ and consider the ball $B_\epsilon^p(x) = \{ x'\in\mathbb{R}^d \,\colon\, \|x'-x\|_p \leq \epsilon \}$. 
Do not confuse the notation $B_\epsilon^p(x)$ with $\B_{\epsilon}^p(x)=B_\epsilon^p(x)\cap \Pset$.
	Let $y,y'\in B_\epsilon^p(x)$ and $z(t):=ty+(1-t)y'$ for $t\in[0,1]$. Let us check that $\|z(t)-x\|_p<\epsilon$ for $t\in(0,1)$. Let define the function $f(t):= (\|z(t) - x\|_p)^p = \sum_{i=1}^d|ty_i+(1-t)y'_i-x_i|^p$. It is straightforward to show that $f(t)$ is convex, with exactly one local minimum and $f(0)=f(1)=\epsilon^p$. Then, $f(t)<\epsilon^p$ for all $t\in (0,1)$.
	Therefore, these calculations imply that any linear variety intersecting the ball $B_\epsilon^p(x)$ in more than one point, necessarily intersects the interior of the ball. Clearly, this does not hold for $p=1$ or $p=\infty$.
	
	Given $x\in\mathbb{R}^d$ we can define the sets
	\begin{align*}
	\overline{C}_x:=& \left\{x'\in\mathbb{R}^d\,\colon\, s_k(x')\leq s_k(x)\quad\forall k\in \{1,\ldots, d\} \right\},\\
	\underline{C}_x:=& \left\{x'\in\mathbb{R}^d\,\colon\, s_k(x')\geq s_k(x)\quad\forall k\in \{1\ldots, d\} \right\}
	\end{align*}
	Both $\overline{C}_x$ and $\underline{C}_x$ are convex (unbounded) polyhedra.
	
	Let now assume $x\in\Delta_d^{\downarrow}$, $\epsilon>0$ and $1<p<\infty$ such that $\B_{\epsilon}^p(x) \cap \partial \Pset = \emptyset$. Recall that the ball $\B_{\epsilon}^p(x)$ is equal to $B_\epsilon^p(x)\cap \Pset$ and let us denote its supremum by $\bigvee \B_\epsilon^p(x):=\overline{x}^{(\epsilon)}$. Notice that, being $\overline{x}^{(\epsilon)}$ the supremum of $\B_{\epsilon}^p(x)$, the polytope $\overline{\C}_{\overline{x}^{(\epsilon)}}:=\overline{C}_{\overline{x}^{(\epsilon)}}\cap \Pset$ must contain the ball $\B_{\epsilon}^p(x)$. Then, from our previous calculations, if a face of $\overline{\C}_{\overline{x}^{(\epsilon)}}$ intersects $\B_{\epsilon}^p(x)$, then this intersection consists of one point.
	
	Let assume that $\overline{x}^{(\epsilon)}\in \B_{\epsilon}^p(x)$ and let us make the following remarks:
	\begin{enumerate}
		\item \label{point1} Notice that, if $\overline{x}^{(\epsilon)}$ is in the interior of $\B_{\epsilon}^p(x)$, then there exists a face of $\overline{\C}_{\overline{x}^{(\epsilon)}}$ that intersects the ball in more than one point. Hence, $\|\overline{x}^{(\epsilon)}-x\|_p=\epsilon$.
		\item \label{point2} If a face of $\overline{\C}_{\overline{x}^{(\epsilon)}}$ intersects the ball $\B_{\epsilon}^p(x)$ in one point, then this point has to be the supremum. Therefore, this face is tangential to the $\B_{\epsilon}^p(x)$ at $\overline{x}^{(\epsilon)}$.
		\item \label{point3} From the equations of $\overline{C}_x$, it follows that the one-dimensional faces of $\overline{\C}_{\overline{x}^{(\epsilon)}}$ passing through $\overline{x}^{(\epsilon)}$ are given by $\gamma_1, \ldots, \gamma_{d-1}$, with
		\begin{equation}
		\gamma_i:=\{t(0,\ldots 0,1,-1,0,\ldots,0)+\overline{x}^{(\epsilon)}: \ t\ge 0\}
		\end{equation}
		where $t$ is bounded above by the condition $\gamma_i \subseteq \Delta_d^{\downarrow}$ and the entry $1$ of the $d$-dimensional vector $(0,\ldots 0,1,-1,0,\ldots,0)$ is in position $i$.
	\end{enumerate}
	
	Let us consider a maximal dimensional face of $\overline{\C}_{\overline{x}^{(\epsilon)}}$ that intersects the ball $\B_{\epsilon}^p(x)$. From items (\ref{point1}) and (\ref{point2}), this face has to be tangential to the ball at $\overline{x}^{(\epsilon)}$.  By the geometry of the ball, the vector $\overline{x}^{(\epsilon)}-x$ is normal to the tangent space at $\overline{x}^{(\epsilon)}$. Hence, $(\overline{x}^{(\epsilon)}-x)\bot\gamma_i$ for all $i=1,\ldots,d-1$. Then, by the characterization given in (\ref{point3}) it follows $\overline{x}^{(\epsilon)}=x+t(1,\ldots,1)\not\in\Pset$ for $t\neq 0$, a contradiction.
	
	A similar argument applies for the minimum.
\end{proof}

\hfill

\begin{proof}[{Proof of Theorem} \ref{th:sublattice}:]
	Let $x \in \pset$ and $\epsilon>0$, and consider an arbitrary subset $\B \subseteq \B^\infty_\epsilon(x)$. Let $\bar{s}_k:=\sup\{s_k(x')\,\colon x'\in\B\}$ with $k \in \left\{0, \ldots, d \right\}$, and let $\bar{L}(\omega)$ be the upper envelope of the piecewise linear curve interpolating the points $\left\{(k,\bar{s}_k)\right\}_{k=0}^d$. According to Lemma~\ref{lemma:completnessmaj}, $\bar{L}(\omega)$ is the Lorenz curve of the supremum of $\B$. Then, $\bigvee \B:= x^{\sup} = \left(\bar{L}(1), \bar{L}(2) - \bar{L}(1), \ldots, \bar{L}(d) - \bar{L}(d-1)\right)$. Thus, we must prove that $\| x^{\sup} - x\|_\infty \leq \epsilon$. First, it is important to remark that, by construction, the set of indices $\K:=\{k\,\colon\, \bar{L}(k)=\bar{s}_k\}$ is non-empty, with $0,1,d \in\K$. Notice that, given $k$, there exist $k_0,k_1\in\K$ such that $k_1 < k \le k_0$ and
	\begin{equation}
	\bar{L}(k)-\bar{L}(k-1)= \bar{s}_{k_0}-\bar{L}(k_0-1)=\bar{L}(k_1+1)-\bar{s}_{k_1}.
	\end{equation}
	Let us prove that $|x^{\sup}_k-x_k|\leq \epsilon$ for all $k \in \left\{1, \ldots, d\right\}$. Let us first note that
	\begin{align}
	x^{\sup}_k-x_k &= \bar{L}(k)-\bar{L}(k-1) - x_k  \label{eq:s1} \\
	& = \bar{s}_{k_0}-\bar{L}(k_0-1) - x_k  \label{eq:s2} \\
	&\leq \bar{s}_{k_0}- \bar{s}_{k_0-1} -x_{k_0}   \label{eq:s3} \\
	&= \sup\{s_{k_0}(x'): x' \in \B \}-\sup\{s_{k_0-1}(x') : x' \in \B\} - x_{k_0} \label{eq:s4} \\
	& \leq \sup \{s_{k_0}(x')-s_{k_0-1}(x') : x' \in B \}-x_{k_0}  \label{eq:s5} \\
	&= \sup\{x'_{k_0}: x' \in \B \} - x_{k_0} \le \epsilon, \label{eq:s6}
	\end{align}
	where we have used $\bar{s}_{k_0-1} \leq \bar{L}(k_0-1)$ and $x_{k_0} \leq x_k$ from~\eqref{eq:s2} to~\eqref{eq:s3}, and $\sup \mathcal{A} - \sup \mathcal{B} \leq \sup \{\mathcal{A} -\mathcal{B}\}$ for $\mathcal{A},\mathcal{B} \subseteq \mathbb{R}$ from~\eqref{eq:s4} to~\eqref{eq:s5}, respectively.
	
	On the other hand, for a given $\delta>0$, there exists $x'$ such that $\bar{s}_{k_1}-\delta< s_{k_1}(x')$. Then,
	\begin{align}
	x^{\sup}_k -x_k + \delta &=\bar{L}(k)-\bar{L}(k-1) - x_k + \delta \label{eq:ss1}\\
	&= \bar{L}(k_1+1)-\bar{s}_{k_1} - x_k + \delta \label{eq:ss2} \\
	&\geq  \bar{s}_{k_1+1}-\bar{s}_{k_1} -x_{k_1+1}+ \delta \label{eq:ss3}\\
	&> \bar{s}_{k_1+1} - s_{k_1}(x')-x_{k_1+1} \label{eq:ss4} \\
	&\geq s_{k_1+1}(x')-s_{k_1}(x')-x_{k_1+1} \label{eq:ss5} \\
	& =x'_{k_1+1}-x_{k_1+1} \label{eq:ss6}\\
	&\geq -\epsilon.
	\end{align}
	where we have used $\bar{s}_{k_1+1}\leq \bar{L}(k_1+1)$ and $x_k \leq x_{k_1}$ from~\eqref{eq:ss2} to~\eqref{eq:ss3}. Then, $x^{\sup}_k -x_k + \delta >-\epsilon$ for all $\delta>0$, hence $x^{\sup}_k -x_k \ge -\epsilon$. Finally, $x^{\sup} \in \B^\infty_\epsilon(x)$ and $x^{\sup} \succeq x'$ for all $x'\in \B$.
	
	A more direct argument applies for the infimum.
\end{proof}

\hfill

\begin{proof}[{Proof of Theorem} \ref{th:max}:]
	According to Theorem~\ref{th:sublattice}, the supremum of $\B^\infty_\epsilon(x)$ is indeed the maximum. Let us denote it as $\max \B^\infty_\epsilon(x) := x + \nu^{\max}$, so that we have to prove that $\nu^{\max} = \nu$. On the one hand, by construction, one has that $\max\{-x_k, -\epsilon\}\leq \nu^{\max}_k \leq \epsilon$ for all $k \in \{1, \ldots, d\}$.
	On the other hand, notice that $x + \nu^{\max} \succeq x + \nu$ if and only if
	\begin{equation} \label{eq:sumaknumax2}
	s_k(\nu^{\max}) \geq s_k(\nu) \ \text{for all} \ k \in \{1, \ldots, d\}.
	\end{equation}
	Let us compare these inequalities by analyzing the following cases:
	\begin{itemize}
		\item if $k \in \{1, \ldots, k_0 -1 \}$, one has $s_k(\nu) = kr$. Hence, $s_k(\nu^{\max}) \leq s_k(\nu)$. This inequality together with~\eqref{eq:sumaknumax2} lead to $\nu^{\max}_k = \nu_k$;
		\item if $k \in \{k_0 +1, \ldots, d \}$, one has $\nu^{\max}_k \geq \nu_k=\max\{-x_k,-\epsilon\}$. Then, $\sum^d_{i=k} \nu_i^{\max} \geq \sum^d_{i=k} \nu_i$ and this is equivalent to $s_k(\nu^{\max}) \leq s_k(\nu)$. Therefore, the latter inequality together with~\eqref{eq:sumaknumax2} give $\nu^{\max}_k = \nu_k$;
		\item if $k = k_0$, one has $\nu^{\max}_{k_0} = \nu_{k_0}$, since $s_d(\nu^{\max})=s_d(\nu)$ and the two points given above.
	\end{itemize}
	Hence, $\nu^{\max} = \nu$. Therefore, $\max \B^\infty_\epsilon(x) = \xmax_\infty$.
\end{proof}

\hfill

\begin{proof}[{Proof of Theorem} \ref{th:min}:]
	Let us prove that $\xmin_\infty:=(\bar{L}(1),\bar{L}(2) - \bar{L}(1),\ldots,\bar{L}(d)-\bar{L}(d-1))$ satisfies $\xmin_\infty \in \B_\epsilon^\infty(x)$ and, for all $x'\in\B_\epsilon^\infty(x)$, one has $L_{x'}(\omega) \geq \bar{L}(\omega)$ for all $\omega \in [0,d]$. This is equivalent to prove that $\min \B^\infty_\epsilon(x) = \xmin_\infty$.
	
	Let us begin with the proof of $\xmin_\infty \in \B_\epsilon^\infty(x)$. Before that, let us observe that, by construction, the set of indices 
	$\K:=\{k\,\colon\, \bar{L}(k)=s_k(x - \nueps)\}$ is non-empty, with $0,d \in\K$. Notice that given $k$, there exists $k_0,k_1\in\K$ such that $k_1 < k \le k_0$ and
	\begin{equation}
	\bar{L}(k)-\bar{L}(k-1)= s_{k_0}(x - \nueps)-\bar{L}(k_0-1)=\bar{L}(k_1+1)- s_{k_1}(x - \nueps).
	\end{equation}
	Let us prove that $|\xmin_{\infty\,k}-x_k|\leq \epsilon$ for all $k \in \left\{1, \ldots, d\right\}$. Let us first note that
	\begin{align}
	\xmin_{\infty\,k}-x_k &= \bar{L}(k)-\bar{L}(k-1) - x_k  \label{eq:s1min} \\
	& = s_{k_0}(x - \nueps)-\bar{L}(k_0-1) - x_k  \label{eq:s2min} \\
	&\leq s_{k_0}(x - \nueps)- s_{k_0-1}(x - \nueps) -x_{k_0}   \label{eq:s3min} \\
	&= \nueps_{k_0} \leq \epsilon, \label{eq:s4min}
	\end{align}
	where we have used $\bar{s}_{k_0-1} \leq \bar{L}(k_0-1)$ and $x_{k_0} \leq x_k$ from~\eqref{eq:s2min} to~\eqref{eq:s3min}. On the other hand, let us note that
	\begin{align}
	\xmin_{\infty\,k}-x_k &= \bar{L}(k)-\bar{L}(k-1) - x_k  \label{eq:ss1min} \\
	& = \bar{L}(k_1+1)- s_{k_1}(x - \nueps) - x_k  \label{eq:ss2min} \\
	&\geq s_{k_1+1}(x - \nueps)- s_{k_1}(x - \nueps) -x_{k_1}   \label{eq:ss3min} \\
	&= \nueps_{k_1} \geq - \epsilon, \label{eq:ss4min}
	\end{align}
	where we have used $s_{k_1+1}(x - \nueps)\leq \bar{L}(k_1+1)$ and $x_k \leq x_{k_1}$ from~\eqref{eq:ss2min} to~\eqref{eq:ss3min}. 
	Therefore, $\xmin_\infty\in \B^\infty_\epsilon(x)$, and its Lorenz curve is given by $\bar{L}(\omega)$.
	
	Let us denote by $L_{x - \nueps}(\omega)$ to the polygonal curve given by the linear interpolation of the points $\{(k, s_k(x - \nueps)) \}_{k=0}^d$. Although $L_{x - \nueps}$ is not necessarily a Lorenz curve, by construction, it satisfies that for all $x' \in \B^\infty_\epsilon(x)$, one has $L_{x'}(\omega) \geq L_{x - \nueps}(\omega)$ for all $\omega \in [0,d]$. On the other hand, by appealing to the definition of upper envelope, one has that, for all $x' \in \B^\infty_\epsilon(x)$, $L_{x'}(\omega) \geq \bar{L}(\omega) \geq L_{x - \nueps}(\omega)$ for all $\omega \in [0,d]$.
\end{proof}

\hfill

\begin{proof}[{Proof of Corollary} \ref{cor:preserveorder}:]
	Let us introduce the polygonal curves $L_{x - \nueps}(\omega)$ and $L_{y - \nueps}(\omega)$ with $\omega \in  [0,d]$ given by the linear interpolation of the sets of points $\left\{\left(k,s_k(x - \nueps)\right)\right\}_{k=0}^d$ and $\left\{\left(k,s_k(y - \nueps)\right)\right\}_{k=0}^d$ with $\nueps$ given by~\eqref{eq:nueps}, respectively. Notice that these curves are not necessarily Lorenz curves. Given that $s_k(x) \geq s_k(y)$ for all $k \in \{1, \ldots, d\}$, then $L_{x - \nueps}(\omega)\geq L_{y - \nueps}(\omega)$ for all $\omega \in  [0,d]$. As a consequence, their respective upper envelopes preserve this order. Finally, by
	appealing to Theorem~\ref{th:min}, one has 
	$\xmin_\infty \succeq \underline{y}^{(\epsilon)}_\infty$.
\end{proof}

\hfill

\begin{proof}[{Proof of Corollary} \ref{cor:semigroup}:]
Let $L_{x - \nu^{(\epsilon_2)}}(\omega)$, $L_{\underline{x}^{(\epsilon_2)} - \nu^{(\epsilon_1)}}(\omega)$ and $L_{x - \nu^{(\epsilon_1+\epsilon_2)}}(\omega)$ the polygonal curves given by the linear interpolation of the points $\left\{\left(k,s_k(x - \nu^{(\epsilon_2)})\right)\right\}_{k=0}^d$, $\left\{\left(k,s_k(\underline{x}^{(\epsilon_2)} - \nu^{(\epsilon_1)})\right)\right\}_{k=0}^d$ and $\left\{\left(k,s_k(x - \nu^{(\epsilon_1+\epsilon_2)})\right)\right\}_{k=0}^d$, respectively; and let $\overline{L}_{x - \nu^{(\epsilon_2)}}(\omega)$, $\overline{L}_{\underline{x}^{(\epsilon_2)}-\nu^{(\epsilon_1)}}(\omega)$ and $\overline{L}_{x - \nu^{(\epsilon_1+\epsilon_2)}}(\omega)$ their corresponding upper envelopes.  Notice that~\eqref{eq:semigroup} is equivalent to $\overline{L}_{x - \nu^{(\epsilon_1+\epsilon_2)}}(\omega) = \overline{L}_{\underline{x}^{(\epsilon_2)}-\nu^{(\epsilon_1)}}(\omega)$
	for all $\omega \in [0,d]$.
	
	Let us focus our attention in an arbitrary interval $[k,k']$ where $L_{\underline{x}^{(\epsilon_2)}}(\omega)$ is linear. Notice that $L_{x - \nu^{(\epsilon_2)}}(\omega) = \overline{L}_{x - \nu^{(\epsilon_2)}}(\omega)$ for $\omega = k, k'$ and $L_{x - \nu^{(\epsilon_2)}}(\omega) \leq \overline{L}_{x - \nu^{(\epsilon_2)}}(\omega)$ for all $\omega \in (k,k')$. As a consequence, one has $L_{x - \nu^{(\epsilon_1+\epsilon_2)}}(\omega) = L_{\underline{x}^{(\epsilon_2)} - \nu^{(\epsilon_1)}}(\omega)$ for $\omega = k, k'$ and $L_{x - \nu^{(\epsilon_1+\epsilon_2)}}(\omega) \leq L_{\underline{x}^{(\epsilon_2)} - \nu^{(\epsilon_1)}}(\omega)$ for all $\omega \in (k,k')$, being $L_{\underline{x}^{(\epsilon_2)} - \nu^{(\epsilon_1)}}(\omega)$ a convex function in this interval. Therefore, their upper envelopes have to coincide in this interval, that is, $\overline{L}_{x - \nu^{(\epsilon_1+\epsilon_2)}}(\omega) = \overline{L}_{\underline{x}^{(\epsilon_2)}-\nu^{(\epsilon_1)}}(\omega)$ for all $\omega \in [k,k']$. Repeating this argument for all intervals of the form $[k,k']$ where $L_{\underline{x}^{(\epsilon_2)}}(\omega)$ is linear, one obtains the desired result $\overline{L}_{x - \nu^{(\epsilon_1+\epsilon_2)}}(\omega)= \overline{L}_{\underline{x}^{(\epsilon_2)}-\nu^{(\epsilon_1)}}(\omega)$ for all $\omega \in [0,d]$.
\end{proof}

\hfill

\begin{proof}[{Proof of Theorem} \ref{prop:equivballs}:]
	First, let us prove an intermediate result. Let $\mathcal{V}^p_\epsilon=\{\nu \in\mathbb{R}^3\,\colon\,\|\nu\|_p\le \epsilon \ \text{and} \ \nu_1+\nu_2+\nu_3=0\}$. Then, $\mathcal{V}^1_\epsilon = \mathcal{V}^\infty_{\frac{\epsilon}{2}}$. Indeed, notice that
	\begin{itemize}
		\item If $|\nu_1|\ge |\nu_2|$ and $\nu_1\ge 0\ge \nu_2$, then
		\[
		\|\nu\|_1=\nu_1-\nu_2+|\nu_3|=\nu_1-\nu_2+\nu_1+\nu_2=2\nu_1\le \epsilon \iff \|\nu\|_\infty = \nu_1 \le \frac{\epsilon}{2}.
		\]
		\item If $|\nu_1|\ge |\nu_2|$ and $\nu_2\ge 0\ge \nu_1$, then
		\[
		\|\nu\|_1=-\nu_1+\nu_2+|\nu_3|=-\nu_1+\nu_2-\nu_1-\nu_2=2|\nu_1|\le \epsilon \iff \|\nu\|_\infty = |\nu_1| \le \frac{\epsilon}{2}.
		\]
		\item If $|\nu_2|\ge |\nu_1|$ and $\nu_2\ge 0\ge \nu_1$, then
		\[
		\|\nu\|_1=-\nu_1+\nu_2+|\nu_3|=-\nu_1+\nu_2+\nu_1+\nu_2=2\nu_2\le \epsilon \iff \|\nu\|_\infty = \nu_2 \le \frac{\epsilon}{2}.
		\]
		\item If $|\nu_2|\ge |\nu_1|$ and $\nu_1\ge 0\ge \nu_2$, then
		\[
		\|\nu\|_1=\nu_1-\nu_2+|\nu_3|=\nu_1-\nu_2-\nu_1-\nu_2=2|\nu_2| \le \epsilon \iff \|\nu\|_\infty = |\nu_2| \le \frac{\epsilon}{2}.
		\]
	\end{itemize}
	
	Finally, let us notice that
	\[
	x' \in \B^1_\epsilon(x) \iff x'-x \in \mathcal{V}^1_\epsilon  = \mathcal{V}^\infty_{\frac{\epsilon}{2}} \iff x' \in\B^\infty_{\frac{\epsilon}{2}}(x).
	\]
\end{proof}

\hfill

%{\color{orange}{%
\begin{proof}[{Proof of Corollary} \ref{cor:addmisible_minmax}:]
    Let $(x,\epsilon)$ be an admissible pair as in Definition~\ref{def:admissible}. Before presenting the proof, let us observe that any admissible pair satisfies $x_k \geq x_{k+1} + 2\epsilon \geq 2\epsilon$.
    
    Now, from Theorem~\ref{th:max} and using the previous observation, it is easy to see that $k_0 = d/2$ when $d$ is even, and $k_0 = (d+1)/2$ when $d$ is odd. Then, $k_1 = k_2 = d$, $\delta = \epsilon$ and $f(k) = (d-k)\epsilon$. Finally, using those parameters one has $\nu = \nueps$, arriving to the desired result: $\xmax = x + \nueps$.
    
    For the minimum, we invoke Theorem~\ref{th:min}. In this case, it is direct to observe that $x - \nueps \in \Pset$, that is to say that $x - \nueps$ is an ordered probability vector. Hence, $\bar{L}(\omega)$ coincides with the Lorenz curve of $x - \nueps$, and $\xmin_k = (x - \nueps)_k$, concluding the proof.
\end{proof}
%}}

\begin{proof}[{Proof of Corollary} \ref{cor:preserveordermax}:]
	From~\eqref{eq:maxminapair}, one has $\xmax_\infty = x+ \nueps$ and $\overline{y}^{(\epsilon)}_\infty=y+ \nueps$. Given that $x \succeq y$, it directly follows $\xmax_\infty = x+ \nueps \succeq \overline{y}^{(\epsilon)}_\infty=y+ \nueps$.
\end{proof}

\hfill

\begin{proof}[{Proof of Corollary} \ref{cor:semigroupmax}:]
	From ~\eqref{eq:maxminapair}, one has $\overline{x}^{(\epsilon_1+\epsilon_2)}_\infty
	= x + \nu^{(\epsilon_1+\epsilon_2)}$ and $\overline{x}^{(\epsilon_2)}_\infty= x + \nu^{(\epsilon_2)}$. Then, $\overline{\left(\overline{x}^{(\epsilon_2)}_\infty\right)}^{(\epsilon_1)}_\infty= \overline{(x + \nu^{(\epsilon_2)})}^{(\epsilon_1)}_\infty = x + \nu^{(\epsilon_2)} + \nu^{(\epsilon_1)} = x + \nu^{(\epsilon_1+\epsilon_2)} = \overline{x}^{(\epsilon_1+\epsilon_2)}_\infty$.
\end{proof}

\section{Proofs of Sec.~\ref{sec:approx}:}
\label{app:proofsapprox}

\begin{proof}[{Proof of Theorem} \ref{th:approxmajequiv}:]
	Let us first show the equivalence between $x \postmaj{D_\infty} y$ and $x \succeq \underline{y}^{(\epsilon)}_\infty$. Assume $x \postmaj{D_\infty} y$.
Then, there exists $y' \in \Pset$ such that $x \succeq y'$ and $\|y' - y\|_\infty \leq \epsilon$. Then, $y' \in \B_\epsilon^\infty(y)$ and, by definition, one has $y' \succeq \underline{y}^{(\epsilon)}_\infty$. Thus, if $x \postmaj{D_\infty} y \Rightarrow x\succeq \underline{y}^{(\epsilon)}_\infty$. The converse statement is straightforward, since by definition, one has $\|\underline{y}^{(\epsilon)}_\infty - y\|_\infty \leq \epsilon$, 
hence $x\succeq \underline{y}^{(\epsilon)}_\infty$ implies $x \postmaj{D_\infty} y$. Similar arguments can be used to prove the equivalence between $x \premaj{D_\infty}  y$ and 
$\xmax_\infty \succeq y$.
\end{proof}

\hfill

\begin{proof}[{Proof of Theorem} \ref{prop:uppboundspostmaj}:]
	Let us first show the equivalence between both quantities $\epsilon_1$ and $\epsilon_2$.
	On the one hand, from the definition of $\epsilon_1$, one has $x\succeq\underline{y}^{(\epsilon_1)}_\infty$, so that $\epsilon_1 = \|\underline{y}^{(\epsilon_1)}_\infty - y\|_{\infty} \in \{\|y'-y\|_{\infty} \, \colon\, x\succeq y'\}$. Then, $\epsilon_1 \geq \epsilon_2$.
	On the other hand, from the definition of $\epsilon_2$, one has $x \succeq y'_0 \succeq \underline{y}^{(\epsilon_2)}_\infty$ with $y'_0:= \arg\min\{\|y'-y\|_{\infty}\,\colon\,x\succeq y'\}$. Hence, $\epsilon_2\geq \epsilon_1$. Therefore, $\epsilon_1 = \epsilon_2$.
	
	Let us note that, since $x\succeq x$, one has $\epsilon_2 \leq \|x-y\|_{\infty}$. Let us now show that this upper bound is sharp. Take for example $x = \left(\frac{1}{3}, \frac{1}{3}, \frac{1}{3}\right)$ and $y = \left(\frac{1}{2}, \frac{1}{4}, \frac{1}{4}\right)$ (clearly $x \nsucceq y$). Then, $\|x-y\|_{\infty} = \frac{1}{6}$ and it is easy to check $\underline{y}^{(1/6)}_\infty=x$ and $x\nsucceq\underline{y}^{(\epsilon)}_\infty$ for $\epsilon < \frac{1}{6}$, hence $\epsilon_1 = \frac{1}{6}$.
\end{proof}

\hfill

\begin{proof}[{Proof of Theorem} \ref{prop:uppboundspremaj}:]
	Let us first show the equivalence between the quantities $\epsilon_1$ and $\epsilon_2$. On the one hand, from the definition of $\epsilon_1$, one has $\overline{x}^{(\epsilon_1)}_\infty \succeq y$, so that $\epsilon_1=\|\overline{x}^{(\epsilon_1)}_\infty-x\|_{\infty} \in \{\|x'-x\|_{\infty}\,\colon\,x'\succeq y\}$. Then, $\epsilon_1 \geq \epsilon_2$.
	On the other hand, from the definition of $\epsilon_2$, one has $\overline{x}^{(\epsilon_2)}_\infty \succeq x_0'\succeq y$ with $x_0':= \arg\min\{\|x'-x\|_{\infty}\,\colon\,x'\succeq y\}$. Hence, $\epsilon_2\geq \epsilon_1$. Therefore, $\epsilon_1 = \epsilon_2$.
	
	Let us note that, since $y\succeq y$, one has $\epsilon_2 \leq \|x-y\|_{\infty}$. Let us now show that this upper bound is sharp. Take for example $x = \left(\frac{1}{2}, \frac{1}{4}, \frac{1}{4}\right)$ and $y = (1, 0, 0)$ (clearly $x \nsucceq y$). Then, $\|x-y\|_{\infty} = \frac{1}{2}$ and it is easy to check that $\overline{x}^{(1/2)}_\infty = y$ and $\xmax_\infty \nsucceq y$ for $\epsilon < \frac{1}{2}$. Hence, $\epsilon_1 = \frac{1}{2}$.
\end{proof}

%%%
\section{Proofs of Sec.~\ref{sec:nonuniform}:}
\label{app:nonuniform}

\begin{proof}[Proof of Theorem \ref{th:equivapproxtransf}:]

Let us first rewrite the statements,
\begin{enumerate}[(i)]
\item there exists a quantum state $\sigma'$ such that $\rho \mapsto \sigma'$ and $\|\sigma -\sigma'\|_p \leq \epsilon$.
\item there exists a probability vector $y'$ such that $x(\rho) \maj y'$ and $\|x(\sigma) - y'\|_p \leq \epsilon$.
\item there exists a quantum state $\rho'$ such that $\rho' \mapsto \sigma$ and $\|\rho -\rho'\|_p \leq \epsilon$.
\item there exists a probability vector $x'$ such that $x' \maj x(\sigma)$ and $\|x(\rho) - x'\|_p \leq \epsilon$.
\end{enumerate}

Let us prove the implication \ref{stii}$\Rightarrow$\ref{sti}.
Let $\sigma = \sum_k x_k(\sigma) \ket{k}\bra{k}$ be the spectral decomposition of $\sigma$
where the basis is sorted in such a way that $x_1(\sigma)\ge\ldots\ge x_d(\sigma)$.
To see that \ref{stii} implies \ref{sti}, one can define the quantum state 
$\sigma'=\sum_k y'_k  \ket{k}\bra{k}$, which is diagonal in the same basis as $\sigma$ and whose eigenvalues are given by the distribution $y'$. Then, $\|\sigma - \sigma'\|_p = \|x(\sigma) - y'\|_p \leq \epsilon$, where the last inequality follows by hypothesis. Finally, due to Lemma~\ref{lemma:nonunifmaj} and the fact that $y'=x(\sigma')$, 
the condition $x(\rho) \succeq y'$ implies that there exists a unital operation such that $\rho \mapsto \sigma'$.

The implication \ref{stiv}$\Rightarrow$\ref{stiii} follows as before by considering diagonal density matrices.
Indeed, let $\rho=\sum_k x_k(\rho)\ket{k}\bra{k}$ be the spectral decomposition of $\rho$ where the basis is sorted in such way that  $x_1(\rho)\ge\ldots\ge x_d(\rho)$.
Now, taking $\rho'=\sum_k x'_k\ket{k}\bra{k}$ and following the same arguments as before one obtains the implication \ref{stiv}$\Rightarrow$\ref{stiii}.

\

In order to prove \ref{sti}$\Rightarrow$\ref{stii} and \ref{stiii}$\Rightarrow$\ref{stiv} we need to recall Lidskii's Theorem (\eg 
\cite[Th. III.4.4]{Bhatia1997}). 
It says that if $\Phi$ is a symmetric gauge function in $\mathbb{R}^d$ (for instance $\Phi(x)=\|x\|_p$ or $\Phi(x)=\|x\|_{\infty}$, see in 
\cite[Ex. II.3.13]{Bhatia1997} for more examples), then
\[
\Phi(x(\rho)-x(\sigma))\le \Phi(x(\rho-\sigma)),
\]
where $\rho$ and $\sigma$ are Hermitian and 
the entries of $x(\rho)$ and $x(\sigma)$ are the eigenvalues of $\rho$ and $\sigma$ sorted in non-increasing order, respectively.
Also, recall the following property of the Schatten $p$-norm,
\[
\|\rho\|_p=\|x(\rho)\|_p.
\]
Then, combining both results, it follows 
\[
\|x(\rho)-x(\sigma)\|_p \le \|\rho-\sigma\|_p.
\]
Hence, \ref{sti}$\Rightarrow$\ref{stii} follows by taking $y'=x(\sigma')$ together with Lemma~\ref{lemma:nonunifmaj}, whereas  
\ref{stiii}$\Rightarrow$\ref{stiv} follows by taking $x'=x(\rho')$ together with Lemma~\ref{lemma:nonunifmaj}.
\end{proof}

%%%%%%%%%%%%%%%%%%%%%%%%%%%%%%%%%%%%%%%%%%%%%%%%%%%%%%%%%%%%%%%%%%%%%%%%%%%%%%%%%%%%%%%%%%%%%%%%%%
\section*{References}
%%% Bibliography

%%%
\end{document}